\pdfoutput=1

\documentclass[twocolumn,conference,10pt,letter]{IEEEtran}

\usepackage{multirow}
\usepackage{amsfonts,amsmath,amssymb,amsthm}
\usepackage{graphicx}
\usepackage{subcaption}
\usepackage{url}
\usepackage{color}

\newcommand{\be}{\begin{eqnarray}}
\newcommand{\ee}{\end{eqnarray}}

\newcommand{\ben}{\begin{enumerate}}
\newcommand{\een}{\end{enumerate}}
\newcommand{\beq}{\begin{equation}}
\newcommand{\eeq}{\end{equation}}
\newcommand{\beqa}{\begin{eqnarray*}}
\newcommand{\eeqa}{\end{eqnarray*}}
\newcommand{\bit}{\begin{itemize}}
\newcommand{\eit}{\end{itemize}}
\newcommand{\bt}{\begin{tabular}{c}}
\newcommand{\btt}{\begin{tabular}}
\newcommand{\et}{\end{tabular}}

\newtheorem{definition}{Definition}

\newtheorem{corollary}{Corollary}
\newtheorem{lemma}{Lemma}
\newtheorem{theorem}{Theorem}

\DeclareMathOperator*{\argmin}{arg\,min}
\DeclareMathOperator*{\argmax}{arg\,max}

\newcommand{\squishlist}{
   \begin{list}{$\bullet$}
    { \setlength{\itemsep}{0pt}      \setlength{\parsep}{0pt}
      \setlength{\topsep}{3pt}       \setlength{\partopsep}{0pt}
      \setlength{\listparindent}{-2pt}
      \setlength{\itemindent}{-5pt}
      \setlength{\leftmargin}{1em} \setlength{\labelwidth}{0em}
      \setlength{\labelsep}{0.5em} } }
\newcommand{\squishend}{
    \end{list}  }

\begin{document}




\title{Per-Server Dominant-Share Fairness (PS-DSF):\\ A Multi-Resource Fair Allocation Mechanism\\ for Heterogeneous Servers}

\author{
\IEEEauthorblockN{Jalal Khamse-Ashari\IEEEauthorrefmark{1}, Ioannis Lambadaris\IEEEauthorrefmark{1}, George Kesidis\IEEEauthorrefmark{2}, Bhuvan Urgaonkar\IEEEauthorrefmark{2} and Yiqiang Zhao\IEEEauthorrefmark{3}\\}
\IEEEauthorblockA{\IEEEauthorrefmark{1}Dept. of Systems and Computer Engineering, Carleton University, Ottawa, Canada\\
\IEEEauthorrefmark{2}School of EECS, Pennsylvania State University, State College, PA, USA\\
\IEEEauthorrefmark{3}School of Math and Statistics, Carleton University, Ottawa, Canada\\
Emails: \IEEEauthorrefmark{1}\{jalalkhamseashari,ioannis\}@sce.carleton.ca,~\IEEEauthorrefmark{2}\{gik2,buu1\}@psu.edu
~\IEEEauthorrefmark{3}zhao@math.carleton.ca}}

%

%
%
\maketitle

\begin{abstract}
Users of cloud computing platforms pose different types of demands for multiple resources on servers (physical or virtual machines). Besides differences in their resource capacities, servers may be additionally heterogeneous in their ability to service users - certain users' tasks may only be serviced by a subset of the servers. We identify important shortcomings in existing multi-resource fair allocation mechanisms - Dominant Resource Fairness (DRF) and its follow up work - when used in such environments. We develop a new fair allocation mechanism called Per-Server Dominant-Share Fairness (PS-DSF) which we show offers all desirable sharing properties that DRF is able to offer in the case of a single ``resource pool" (i.e., if the resources of all servers were pooled together into one hypothetical server). We evaluate the performance of PS-DSF through simulations. Our evaluation shows the enhanced efficiency of PS-DSF compared to the existing allocation mechanisms. We argue how our proposed allocation mechanism is applicable in cloud computing networks and especially large scale data-centers.
\end{abstract}

\section{Introduction}
Cloud computing has become increasingly popular as it provides a cost-effective alternative to proprietary high performance computing
systems. As the workloads to data-centers housing cloud computing platforms are intensively growing,
developing an efficient and fair allocation mechanism which guarantees quality-of-service for different workloads has become increasingly important.
Resource allocation and especially fair sharing in such shared computing system is particularly challenging because of the following reasons:
a) heterogeneity of servers, b) placement constraints, c) dealing with multiple types of resources, and d) diversity of workloads and demands. 

Real world data-centers are comprised of heterogeneous machines/servers with different configurations,
where some machines might be incompatible for some processing purposes/tasks.
Furthermore, each user may have specific requirements which further restrict the set of servers that the tasks of the user may run on.
For example, a user may require a machine with a public IP address, particular kernel version, special hardware such
as GPUs, or large amounts of memory, and might be unable to run on machines which lack such requirements.
For instance, it has been observed that over $50\%$ of tasks at Google clusters have strict constraints about the machines they can run on \cite{Mod_const, Ghodsi13}.

Besides  placement constraints, users present diversity over the amount of resources they need for executing one task.
For instance, the tasks of some users might be CPU intensive while for others memory or I/O bandwidth might be a bottleneck.
Dominant Resource Fairness is the first allocation mechanism which describes a notion of fairness when allocating multiple types of resources \cite{DRF}.
With DRF users receive a fair share of their \emph{dominant resource}.
Of all the resources requested by the user (for every unit of work called a task), its dominant resource is the one with the highest demand when expressed as a fraction of the overall resource capacity spread across all available servers. There are several other works investigating DRF allocation in case that different resources are distributed over heterogeneous servers but there are no placement constraints \cite{CDRF,Chiang12,HUG,DRFH15}.

There are some recent works investigating max-min fair allocation/scheduling for one type of resource while respecting placement constraints \cite{Ghodsi13,midrr,CM4FQ, Ashari16a,Ashari16b,Ashari17}. These schedulers could be useful in a multi-resource setting only when one of the resources serves as the bottleneck for all users, otherwise they might result in poor resource utilization \cite{DRF,Ghodsi13}. There are limited works in the literature investigating multi-resource fair allocation while respecting placement constraints \cite{UDRF, TSF, BLi16,scheduling-TR2}.
In this case, it is unclear how to globally identify the dominant resource as well as the dominant share for different users, as each user may have access only to a subset of servers. \cite{TSF,BLi16} present an elementary extension of DRF which identify the share of each user by ignoring the placement constraints and applying the same ideas as the un-constrained setting. We show that this approach does not achieve fairness even in the specific case that one of the resources serves as a bottleneck (Further discussions could be found in Section \ref{sec:main:challenge}).


\noindent {\bf Our Contributions:}
We propose a new allocation mechanism called \emph{Per-Server Dominant Share Fairness}.
We show that PS-DSF achieves all the desirable properties offered by DRF for a single resource pool: sharing incentive, strategy proofness, envy freeness, Pareto optimality, bottleneck fairness and single resource fairness (A detailed description of these properties can be found in Section \ref{sec:bg:drf}).
In fact, PS-DSF reduces to DRF when one considers a single server system.

The intuition behind PS-DSF is to compare and weigh the allocated resources to each user from the perspective of each server.
PS-DSF identifies a dominant resource and a virtual dominant share (VDS) for each user {\em with respect to each server} (as opposed to a single system-wide dominant share in DRF). The VDS for user $n$ with respect to (w.r.t.) server $i$ describes the fraction of the dominant resource which should be allocated to user $n$ from server $i$ as if all user $n$'s tasks were allocated resources solely from server $i$.
Each server may then use this localized metric to decide whether to increase or decrease
the allocated tasks to each user without the need to identify a global dominant share.
Besides its enhanced performance, 
PS-DSF is the first (to our knowledge) principled allocation mechanism which could be intrinsically implemented in a distributed manner.

The rest of this paper is organized as follows: In Section II, after describing the model, we give the necessary background and discuss insufficiency of the existing multi-resource allocation mechanisms, especially in case of heterogenous servers with placement constraints. After presenting our proposed allocation mechanism in Section III, we investigate different sharing properties that it satisfies and present a distributed algorithm to realize it.
We present some numerical experiments in Section IV, and finally we draw conclusions in Section V.

\section{Background and Model}\label{sec:bg}
Consider a set $\mathcal{K}$ of ${K}=|\mathcal{K}|$ heterogeneous servers (resource pools) each containing M types of resources.
We denote by $c_{i,r}$ the capacity of resource $r$ on server $i$, where $c_{i,r}\ge0$.
Let $\mathcal{N}$ denote the set of active users, where $N=|\mathcal{N}|$.
Let $d_n=[d_{n,r}]$ denote the per task \emph{demand vector} for user $n\in{\mathcal N}$,
that is the amount of each resource required for executing one task for user $n$.
Let $\phi_n>0$ denote the weight associated with user $n$. The weights reflect the priority of users with respect to each other.

Due to heterogeneity of users and servers, each user may be restricted to get service only from a subset of servers.
For example, each user may have some special hardware/software requirements (e.g., public IP address, a particular kernel version, GPU, etc.)
which restrict the set of servers that the tasks of the user may run on.
Besides such explicit placement constraints, users may not run their tasks on servers which lack some required resources.

For instance, consider the example in Figure \ref{fig:example1}, where three types of resources, CPU, memory, and network bandwidth are available over two servers in the amounts of ${\bf c}_1=[9\mbox{ cores}, 12\mbox{GB}, 100\mbox{Mb/s}]$ and ${\bf c}_2=[12 \mbox{ cores}, 12\mbox{GB}, 0\mbox{Mb/s}]$, where no communication bandwidth is available over the second server.
Consider three users with the weights $\phi_1=\phi_2=1,~\phi_3=2$, whose demand vectors are ${\bf d}_1=[1, 2, 10]$, ${\bf d}_2=[1, 2, 1]$ and ${\bf d}_3=[1, 2, 0]$.
Accordingly, users 1 and 2 are restricted to get service only from the first server, while user 3 may get service from both servers.
In summary, let $\delta_{n,i}=1$ if the tasks of user $n$ can run on server $i$, and otherwise $\delta_{n,i}=0$.

\begin{figure}[h!]
    \centering
\includegraphics[width=2.5in]{./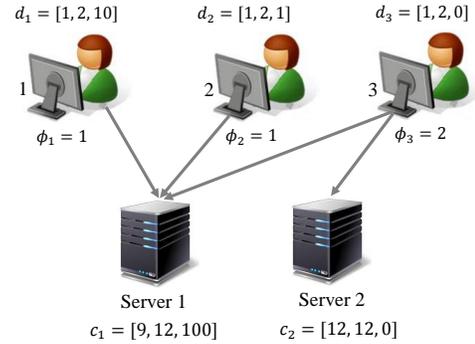}
\centering
\caption{\footnotesize A heterogeneous multi-resource system with three users and two servers.}
\label{fig:example1}
\vspace{-3mm}
\end{figure}

\subsection{Dominant Resource Fairness}\label{sec:bg:drf}
The problem of multi-resource fair allocation was originally studied in \cite{DRF} under the assumption that all resources are aggregated at one resource-pool. Specifically, let $c_r$ denote the total capacity of resource $r$.
Let ${\bf a}_n=[a_{n,r}]$ denote the amounts of different resources allocated to user $n$ under some allocation mechanism $\mathcal{A}$.
The utilization of user $n$ of its allocated resources, $U_n({\bf a}_n)$, is defined as the number of tasks,
$x_n$, which could be executed using ${\bf a}_n$, that is:
\be\label{utility}
U_n({\bf a}_n) := x_n = \min_r\frac{a_{n,r}}{d_{n,r}},
\ee
where, $x_n$ is a non-negative real number.
\cite{DRF} argues that the following important properties must be satisfied by a multi-resource allocation mechanism:
\squishlist
\item \emph{Sharing Incentive:} Consider a generic \emph{uniform allocation} where every user $n$ is allocated $\phi_n/\sum_m\phi_m$ portion of each resource. An allocation is said to provide sharing incentive, when each user is able to run more tasks compared to the uniform allocation.
\item \emph{Envy freeness:} A user should not prefer the allocation of another user when adjusted according to their weights, i.e., $U_n({\bf a}_n)\ge U_n(\frac{\phi_n}{\phi_m}{\bf a}_m)$, for all $m$.
\item \emph{Pareto Optimality:} It should not be possible to increase $x_n$ for any user $n$, without
    decreasing $x_m$ for some user $m$.
\item \emph{Strategy Proofness:} Users should not be able to increase their utilization by lying about their resource demands.
\squishend

%
%
%

Sharing incentive provides performance isolation, as it guarantees a minimum utilization for each user
irrespective of the demands of the other users. Envy freeness embodies the notion of fairness.
Pareto optimality results in maximizing system utilization. 
Finally, strategy proofness prevents users from gaming the allocation mechanism.
The reader is referred to \cite{DRF} or \cite{DRF12} for further details.

DRF is the first multi-resource allocation mechanism satisfying all the above properties.
Specifically, for every user $n$, the \emph{Dominant Resource} (DR) is defined as \cite{DRF}:
\be\label{DR}
\rho(n):=\argmax_rd_{n,r}/c_r,
\ee
that is, the resource whose greatest portion is required for execution of one task for user $n$.
The fraction of the DR that is allocated to user $n$ is defined as \emph{dominant share}:
\be\label{DS}
s_n:=\frac{a_{n,\rho(n)}}{c_{\rho(n)}}.
\ee

Without loss of generality, we may restrict ourselves to non-wasteful allocations, i.e., ${\bf a}_n=x_n{\bf d}_n,~\forall n$.
In this case, an allocation $\{x_n\}$ is feasible when:
\begin{eqnarray}
&&\sum_nx_nd_{n,r}\le c_r,~\forall r.
\end{eqnarray}
\begin{definition}
It is said that $\{x_n\}$ satisfies \emph{DRF}, if it is feasible and the normalized dominant share for each user, $s_n/\phi_n$ cannot be increased while
maintaining feasibility without decreasing $s_m$ for some user $m$ with $s_m/\phi_m\le s_n/\phi_n$ \cite{DRF}.
\end{definition}

DRF is a restatement of max-min fairness in terms of \emph{dominant shares}.
What make it appealing are desirable sharing properties which are satisfied under this allocation mechanism.
Besides the above-mentioned essential properties, DRF also satisfies the following simple but essential properties \cite{DRF}.

\squishlist
\item \emph{Single Resource Fairness:} When there is only one resource type, the allocation satisfies max-min fairness.
\item \emph{Bottleneck Fairness:} If there is one resource which is dominantly requested by each user, 
then the allocation satisfies max-min fairness for that resource.
\squishend

\subsection{Challenges with Heterogeneous Resource-Pools and Placement Constraints}\label{sec:main:challenge}
The notion of DRF has been extended to the case of heterogenous servers, when 
all types of resources are available within each server and there are no placement constraints \cite{DRFH15}.
In this case, DR for user $n$ is readily identified as the resource whose greatest portion is required for
execution of one task as if all resources were integrated at resource pool.
That is, DR for user $n$ could be identified according to \eqref{DR},
where $c_r:=\sum_ic_{i,r}$ is the total capacity of resource $r$.
Furthermore, the \emph{global dominant share} for user $n$ is given by:
\be
s_n=x_n\max_r\frac{d_{n,r}}{c_{r}},
\ee
where $x_n$ here is the total number of tasks which are allocated to user $n$ from different servers, that is $x_n:=\sum_ix_{n,i}$.
In \cite{DRFH15} it is proposed to find $\{x_{n,i}\}$ such that max-min fairness is achieved in terms of global dominant shares.
This mechanism, which is referred to as DRFH, has been shown to achieve Pareto optimality, strategy proofness, envy freeness and bottleneck fairness.
However, it fails to provide sharing incentive \cite{DRFH15}.

When there are placement constraints, it is unclear how to define a single system-wide DR for a user similar to that in \cite{DRFH15}.
A natural first thought may be to identify the DR over the set of eligible servers for each user.
However, in this case users may have an incentive to misreport the set of eligible servers \cite{TSF}.
A strategy-proof approach is to identify the DR for each user as
\emph{if} there were no placement constraints and all resources were integrated at one resource pool.
We argue that this approach, which we refer to as C-DRFH, does not result in a fair allocation as it does not satisfy bottleneck fairness.

To appreciate this shortcoming of C-DRFH, consider the example in Figure \ref{fig:example1},
where the second resource (RAM) is dominantly requested by every user from its eligible servers.
If we allocate the available RAM proportionate to the weights,
6GB is allocated to the first two suers and 12 GB is allocated to the third user.
Accordingly, each user is allocated $x_1=x_{1,1}=3$, $x_2=x_{2,1}=3$ $x_3=x_{3,2}=6$ tasks
(this allocation follows from our proposed allocation mechanism - see Section~\ref{sec:main}).
However, C-DRFH would instead identify bandwidth as the dominant resources for the first user and identifies RAM as the dominant resource for the second and third users. Hence, if we allocate global dominant shares in a weighted fair manner, each user is allocated $x_1=x_{1,1}=2.609$, $x_2=x_{2,1}=3.130$,
and $x_3=x_{3,1}+x_{3,2}=6+0.261=6.261$ tasks respectively, which obviously violates fairness on the bottleneck resource.


\vspace{-1mm}
\begin{table}[ht]
\footnotesize
\centering
\captionsetup{name=Table }
\caption{\footnotesize Properties of different allocation mechanisms in case of heterogeneous servers with placement constraints: sharing~incentive (SI), envy~freeness (EF), strategy~proofness (SP), Pareto optimality (PO), and bottleneck fairness (BF).}
\label{table2}
\begin{tabular}{|c||c|c|c|}
\hline
Property      &      C-DRFH         &     ~~TSF~~        &       PS-DSF\\
\hline
\hline
SI            &                     &  $\checkmark$  &      $\checkmark$\\
\hline
EF            &   $\checkmark$      &  $\checkmark$  &      $\checkmark$\\
\hline
SP            &   $\checkmark$      &  $\checkmark$  &      *\\
\hline
PO            &   $\checkmark$      &  $\checkmark$  &      *\\
\hline
BF            &                     &                &      $\checkmark$\\
\hline
\end{tabular}
\end{table}

Yet another extension of DRF that also considers heterogeneous servers, all containing all types of resources without any placement constraints, is CDRF~\cite{CDRF}.
Specifically, let $\gamma_n:=\sum_i\gamma_{n,i}$ be defined as the number of tasks which are allocated to user $n$
when monopolizing the whole cluster (i.e., if $n$ were the only user running on the cluster).
An allocation is said to satisfy CDRF\footnote{Containerized DRF}, when $x_n/\gamma_n$ satisfies max-min fairness.
In case of one server, $x_n/\gamma_n$ gives dominant share for each user $n$.
As a result, CDRF reduces to DRF in case of one server. In case of multiple heterogeneous servers with no placement constraints,
 CDRF is shown to satisfy Pareto optimality, strategy proofness, envy freeness and sharing incentive properties \cite{CDRF}.

In \cite{TSF} CDRF has been extended to address the placement constraints.
Specifically, let $\gamma_n:=\sum_i\gamma_{n,i}$ be (re)defined as the number of tasks which are allocated to user $n$
from different servers when monopolizing all servers as \emph{if} there were no placement constraints \cite{TSF}.
An allocation is said to satisfy Task Share Fairness (TSF), when $x_n/\gamma_n$ satisfies max-min fairness.
TSF is shown to satisfy Pareto optimality, strategy proofness, envy freeness and sharing incentive properties
in case of heterogeneous servers with placement constraints \cite{TSF}.
However, we argue that this mechanism is not essentially fair as it does not satisfy bottleneck fairness.

For instance, consider again the example in Figure \ref{fig:example1}.
The number of tasks that each user may run in the whole cluster is $\gamma_1=\gamma_2=6$,  and $\gamma_3=12$ tasks, respectively.
Hence, each user is allocated $x_1=x_{1,1}=2$, $x_2=x_{2,1}=2$ and $x_3=x_{3,1}+x_{3,2}=6+2=8$ tasks according to TSF mechanism,
which is completely different and far from the fair allocation.
Table \ref{table2} summarizes different sharing properties which could be satisfied under different allocation mechanisms.
Shortcomings of the existing allocation mechanisms in case of heterogeneous servers with placement constraints motivates us to develop a new allocation mechanism.

\section{Per-Server Dominant Share Fairness}\label{sec:main}
In this section we describe PS-DSF, an extension of DRF that is applicable for heterogeneous resource-pools in the presence of placement constraints.
As discussed in the previous section, in the case of heterogeneous servers and in the presence of placement constraints, it is unclear how to globally identify one DR and the corresponding dominant share for each user. 
The intuition behind PS-DSF is to define a \emph{virtual dominant share} for every user w.r.t. each server.
Towards this, we first define the DR for every user $n$ w.r.t. each server $i$ as:
\be
\rho(n,i):=\argmax_r\frac{d_{n,r}}{c_{i,r}}.
\ee
Let $\gamma_{n,i}$ denote the number of tasks which could be executed by user $n$
when monopolizing server $i$:
\be
\gamma_{n,i}:=\delta_{n,i}\min_r\frac{c_{i,r}}{d_{n,r}}=\delta_{n,i}\frac{c_{i,\rho(n,i)}}{d_{n,\rho(n,i)}}.
\ee
We say that server $i$ is \emph{eligible} to serve user $n$ when $\gamma_{n,i}>0$ or equivalently $\delta_{n,i}=1$.
Without loss of generality we restrict ourselves to non-wasteful allocations, that is ${\bf a}_{n,i}=x_{n,i}{\bf d}_n$, where ${\bf a}_{n,i}=[a_{n,i,r}]$ is the vector of allocated resources to user $n$ from server $i$ and $x_{n,i}\in\mathbb{R}^+$ is the number of allocated tasks from the same server.
\begin{definition}
The Virtual Dominant Share (VDS) for user $n$ w.r.t. server $i$, $s_{n,i}$, is defined as:
\be
s_{n,i}=\frac{x_n}{\gamma_{n,i}},
\ee
where $x_n=\sum_jx_{n,j}$ is the \emph{total number of tasks} that are allocated to user $n$
(whether or not these tasks are actually allocated using server $i$).
\end{definition}
Intuitively, $s_{n,i}$ gives the fraction\footnote{The reader may note that $s_{n,i}$ could be possibly greater than 1, as some tasks could be allocated to user $n$ from other servers.} of the dominant resource for user $n$ w.r.t. server $i$ which should be allocated to it as if $x_n$ tasks were allocated to it entirely from server $i$.
When the available resources over each server are arbitrarily divisible,
we have the following condition on $\{x_{n,i}\}$ to be feasible.
\begin{definition} An allocation, $\{x_{n,i}\}$, is said to satisfy Resource Division Multiplexing (RDM) constraint, when:
\begin{eqnarray}
&&\sum_nx_{n,i}d_{n,r}\le c_{i,r},~\forall i,r.\label{FC1_1}
\end{eqnarray}
\end{definition}
For a data-center comprising of a plurality of servers, it is sometimes of more practical interest to assume that servers \emph{may not} be divided to finer partitions~\cite{Ghodsi13}. Accordingly, the hypervisor may only time-share servers among different users.
In this case, we have the following condition on $\{x_{n,i}\}$ to be feasible.
\begin{definition}
An allocation, $\{x_{n,i}\}$, is said to satisfy Time Division Multiplexing (TDM) constraint, when\footnote{Considering resources such as CPU, BW, $\cdots$ , which are attributed a processing speed per time-unit, $x_{n,i}/\gamma_{n,i}$ represent the percentage of time-unit that server $i$ is allocated to user $n$.}:
\begin{eqnarray}
&&\sum_nx_{n,i}/\gamma_{n,i}\le 1,~\forall i.\label{FC2_1}
\end{eqnarray}
\end{definition}
It can be observed that TDM constraint is more stringent than RDM constraint, as \eqref{FC2_1} implies \eqref{FC1_1}:
\be
1 \ge \sum_n\frac{x_{n,i}}{\gamma_{n,i}}=\sum_n\frac{x_{n,i}d_{n,\rho(n,i)}}{c_{i,\rho(n,i)}}\ge\frac{\sum_nx_{n,i}d_{n,r}}{c_{i,r}},~\forall i,r.
\ee
We investigate our proposed allocation mechanism under both of these feasibility conditions.

\begin{definition}\label{PS_DSF_Def}
An allocation $\{x_{n,i}\}$ satisfies \emph{Per-Server Dominant-Share Fairness}, if it is feasible and the allocated tasks to each user, $x_n$ cannot be increased while maintaining feasibility without decreasing $x_{m,i}$ for some user $m$ and server $i$ with $s_{m,i}/\phi_m\le s_{n,i}/\phi_n$.
\end{definition}

\subsection{An Example}
Consider again the heterogeneous servers from our earlier example this time serving four equally weighted users whose demand vectors
are ${\bf d}_1=[1.5,1,10]$, ${\bf d}_2=[1,2,10]$, ${\bf d}_3=[0.5,1,0]$, ${\bf d}_4=[1,0.5,0]$. We show this in Figure~\ref{fig:example21}.
Note the placement constraints for users 1 and 2 whose tasks may only run on the first server.
We show the PS-DSF allocation (based on RDM) in Figure~\ref{fig:example22}.
The allocated tasks to each user are $x_1=x_{1,1}=3.6$, $x_2=x_{2,1}=3.6$, $x_3=x_{3,2}=8$, $x_4=x_{4,2}=8$, respectively,
where no tasks are allocated to users 3 and 4 from the first server. Specifically, the VDS for user 3 (and user 4 respectively) w.r.t. the first server
is $s_{3,1}=8/12$ ($s_{4,1}=12/12=1$), while the VDS of users 1 and 2 w.r.t. this server is $s_{1,1}=s_{2,1}=0.6$.
The VDS of users 3 and 4 w.r.t. the second server is $s_{3,1}=s_{4,1}=8/12$. The reader may verify that for each server $i$ the allocated tasks to each user may not be increased without decreasing the allocated tasks of some user with less VDS.

\begin{figure}[h!]
    \centering
\includegraphics[width=3in]{./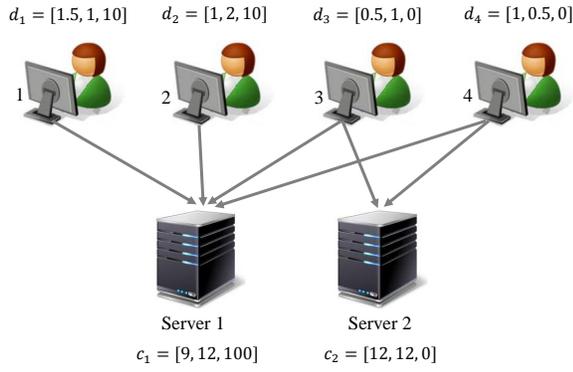}
\centering
\footnotesize
\caption{\footnotesize A heterogeneous multi-resource system with four users and two servers.}
\label{fig:example21}
\end{figure}
\begin{figure}[h!]
    \centering
\includegraphics[width=3.5in]{./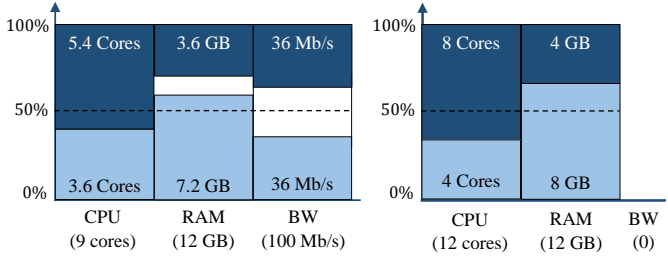}
\centering
\caption{\footnotesize PS-DSF allocation for the example in Figure \ref{fig:example21}.}
\label{fig:example22}
\end{figure}

\subsection{The Properties of the PS-DSF Allocation Mechanism}\label{sec:main:properties}
Before examining different sharing properties satisfied under PS-DSF allocation mechanism,
we describe a necessary and sufficient condition to achieve PS-DSF.

\begin{definition}\label{Def_bottleneck}
Given a feasible allocation $\{x_{n,i}\}$ based on RDM, we say that $r$ is a \emph{bottleneck} resource for user $n$ w.r.t. an eligible server $i$ if $d_{n,r}>0$, $\sum_mx_{m,i}d_{m,r}=c_{i,r}$ (i.e. $r$ is \emph{saturated}), and
\be
\frac{s_{n,i}}{\phi_n}\ge\frac{s_{m,i}}{\phi_m},~\forall m\mbox{ such that }x_{m,i}d_{m,r}>0.
\ee
\end{definition}

\begin{theorem}\label{NS_condition}
A feasible allocation $\{x_{n,i}\}$ based on RDM satisfies PS-DSF if and only if there exists a bottleneck resource for every user w.r.t. every eligible server.
\end{theorem}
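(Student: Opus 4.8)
The plan is to prove the two directions of the equivalence separately, treating the bottleneck condition as a per-server, per-resource saturation certificate that blocks any Pareto-improving reallocation. The basic object to track is that under RDM, increasing $x_n$ means increasing $x_{n,i}$ on at least one eligible server $i$, and feasibility \eqref{FC1_1} is decoupled across servers. So PS-DSF fails exactly when on some server we can locally push more tasks to a high-VDS user while only stealing from users of equal-or-higher normalized VDS; the theorem says this local obstruction is governed resource-by-resource.

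For the ``if'' direction (existence of bottlenecks $\Rightarrow$ PS-DSF), I would argue by contradiction: suppose the allocation is \emph{not} PS-DSF, so there is a feasible $\{x_{n,i}'\}$ with $x_n' > x_n$ for some $n$, obtained without decreasing any $x_{m,i}$ having $s_{m,i}/\phi_m \le s_{n,i}/\phi_n$. Since $x_n' > x_n$, on some eligible server $i$ we have $x_{n,i}' > x_{n,i}$, so user $n$ now consumes strictly more of each resource $r$ with $d_{n,r}>0$ on server $i$. Take the bottleneck resource $r^\ast$ guaranteed for user $n$ w.r.t.\ server $i$: it is saturated, $\sum_m x_{m,i} d_{m,r^\ast} = c_{i,r^\ast}$, and every user $m$ actually using $r^\ast$ on server $i$ (i.e.\ $x_{m,i}d_{m,r^\ast}>0$) satisfies $s_{m,i}/\phi_m \le s_{n,i}/\phi_n$. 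By the no-decrease hypothesis, none of those $x_{m,i}$ may drop; combined with $x_{n,i}'>x_{n,i}$ this forces $\sum_m x_{m,i}' d_{m,r^\ast} > c_{i,r^\ast}$, violating feasibility on server $i$. This contradiction establishes PS-DSF.

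For the ``only if'' direction (PS-DSF $\Rightarrow$ existence of bottlenecks), I would prove the contrapositive: if some user $n$ has \emph{no} bottleneck resource w.r.t.\ some eligible server $i$, I construct a feasible Pareto-improving reallocation. For each resource $r$ with $d_{n,r}>0$, failure of the bottleneck condition means either $r$ is unsaturated on server $i$, or $r$ is saturated but there exists a user $m$ with $x_{m,i}d_{m,r}>0$ and $s_{m,i}/\phi_m < s_{n,i}/\phi_n$ (strictly, since $\ge$ would make it a bottleneck). In the first case there is slack to raise $x_{n,i}$ directly; in the second case I can shift a small amount of work from such a strictly-lower-VDS user $m$ to user $n$, freeing resource $r$. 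Choosing a sufficiently small $\varepsilon>0$ and handling all of user $n$'s resources simultaneously, I obtain $x_n' > x_n$ while only decreasing allocations of users whose normalized VDS is strictly below $s_{n,i}/\phi_n$, which is permitted under Definition~\ref{PS_DSF_Def}. This contradicts PS-DSF.

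The main obstacle I anticipate is the ``only if'' construction: when several of user $n$'s resources are simultaneously saturated on server $i$, I must decrease \emph{several} donor users at once while respecting feasibility on \emph{every} resource and every server, and while ensuring every donor has strictly smaller normalized VDS (so that the decrease is allowed). The delicate point is choosing the step size $\varepsilon$ small enough that no new resource becomes over-subscribed and that each donor's allocation stays nonnegative; one must also verify that reducing donor tasks does not itself violate feasibility elsewhere (it cannot, since reductions only free resources). I expect this bookkeeping, rather than any conceptual difficulty, to be where the real work lies, and I would organize it by first identifying the set of saturated resources for $n$ on server $i$, then selecting for each a strictly-lower-VDS donor and taking $\varepsilon$ to be the minimum of the relevant slacks and donor allocations.
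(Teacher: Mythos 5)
Your ``if'' direction is correct and is essentially the paper's argument: the bottleneck resource $b(n,i)$ is saturated, every user actually consuming it on server $i$ has normalized VDS at most $s_{n,i}/\phi_n$ and hence may not be decreased under the hypothesis of Definition~\ref{PS_DSF_Def}, so any increase of $x_{n,i}$ oversubscribes that resource.

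The ``only if'' direction contains a genuine error: you negate the bottleneck condition with the wrong inequality. Definition~\ref{Def_bottleneck} requires $s_{n,i}/\phi_n\ge s_{m,i}/\phi_m$ for \emph{every} $m$ with $x_{m,i}d_{m,r}>0$, so its failure for a saturated demanded resource $r$ produces a user $m$ with $x_{m,i}d_{m,r}>0$ and $s_{m,i}/\phi_m>s_{n,i}/\phi_n$ --- strictly \emph{greater}, not strictly smaller as you wrote. The sign is the crux: Definition~\ref{PS_DSF_Def} declares an allocation PS-DSF precisely when every increase of $x_n$ forces a decrease of some $x_{m,i}$ with $s_{m,i}/\phi_m\le s_{n,i}/\phi_n$, so a reallocation that only decreases users whose normalized VDS is \emph{strictly below} $s_{n,i}/\phi_n$ is exactly a move that is consistent with PS-DSF and witnesses no violation. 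To contradict PS-DSF you must increase $x_n$ while taking only from users of strictly \emph{higher} normalized VDS, which is what the corrected negation supplies. With the inequality fixed, your construction (unsaturated demanded resources give direct slack; each saturated one admits a strictly-higher-VDS donor; take $\varepsilon$ small enough for feasibility and nonnegativity) becomes the paper's proof, which organizes the same idea by first showing the set $\mathcal{R}_{n,i}$ of saturated demanded resources is nonempty and then that some member of it is a bottleneck. The bookkeeping obstacle you anticipate is real but routine; the substantive defect is the flipped inequality.
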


\begin{theorem}\label{NS_condition2}
A feasible allocation $\{x_{n,i}\}$ based on TDM satisfies PS-DSF if and only if \eqref{FC2_1} holds with equality, and
\be
\frac{s_{n,i}}{\phi_n}\ge\frac{s_{m,i}}{\phi_m},~\forall n\mbox{ and }\forall m\mbox{ such that }x_{m,i}>0.\label{Per_server_cond2}
\ee
\end{theorem}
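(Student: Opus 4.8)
The plan is to prove Theorem \ref{NS_condition2} as an if-and-only-if, handling each direction separately, and to exploit the fact that under TDM the feasibility region is described by a single scalar constraint per server, namely $\sum_n x_{n,i}/\gamma_{n,i}\le 1$. The key observation is that $s_{n,i}/\phi_n = x_n/(\phi_n\gamma_{n,i})$, so for a fixed server $i$ the per-server normalized VDS ordering of users is determined entirely by the aggregate counts $\{x_n\}$ and the monopolization counts $\{\gamma_{n,i}\}$. This makes the TDM case essentially a per-server max-min fairness statement, which should be cleaner than the RDM case of Theorem \ref{NS_condition}.

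For the sufficiency direction (the ``if''), suppose \eqref{FC2_1} holds with equality on every server and \eqref{Per_server_cond2} holds. I would argue by contradiction against Definition \ref{PS_DSF_Def}: assume some user $n$'s total $x_n$ can be increased while maintaining feasibility without decreasing any $x_{m,i}$ with $s_{m,i}/\phi_m\le s_{n,i}/\phi_n$. To raise $x_n$ I must raise $x_{n,i}$ on at least one eligible server $i$; but since that server already satisfies $\sum_m x_{m,i}/\gamma_{m,i}=1$ with equality, increasing $x_{n,i}/\gamma_{n,i}$ forces a decrease in $\sum_{m\ne n} x_{m,i}/\gamma_{m,i}$, so some $x_{m,i}>0$ must decrease. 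First I would note that on this server every active user $m$ satisfies $s_{m,i}/\phi_m\ge s_{n,i}/\phi_n$ by \eqref{Per_server_cond2} applied with the roles reversed — that is, \eqref{Per_server_cond2} guarantees $n$ is not strictly dominant unless equality holds. The decreased user $m$ therefore has $s_{m,i}/\phi_m\ge s_{n,i}/\phi_n$, and if the inequality is strict this contradicts the protection clause directly; the equality case requires the observation that decreasing $x_{m,i}$ lowers $x_m$ hence $s_{m,i}/\phi_m$ below $s_{n,i}/\phi_n$, again violating the clause. This yields the contradiction.

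For the necessity direction (the ``only if''), I would show that a PS-DSF allocation forces both conditions. First, equality in \eqref{FC2_1} on each eligible-and-used server: if some server $i$ had slack, $\sum_m x_{m,i}/\gamma_{m,i}<1$, then for any user $n$ eligible on $i$ I could increase $x_{n,i}$ (and hence $x_n$) without decreasing any other $x_{m,j}$, contradicting Pareto-style maximality in Definition \ref{PS_DSF_Def}; here I should be careful to pick $n$ eligible on $i$, which exists whenever the server carries load, and to handle idle eligible servers by noting they can be given load. Second, to get \eqref{Per_server_cond2}, I would suppose it fails: there exist $n,m,i$ with $x_{m,i}>0$ but $s_{m,i}/\phi_m < s_{n,i}/\phi_n$. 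Then I can transfer an infinitesimal $\epsilon$ of the time-share from $m$ to $n$ on server $i$ — decreasing $x_{m,i}$ by $\epsilon\gamma_{m,i}$ and increasing $x_{n,i}$ by $\epsilon\gamma_{n,i}$ — which keeps the scalar TDM constraint tight on $i$, raises $x_n$, and only decreases the allocation of $m$, who has strictly smaller normalized VDS; this is precisely the move Definition \ref{PS_DSF_Def} forbids, giving the contradiction.

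The main obstacle I anticipate is the boundary bookkeeping around equality cases and eligibility in the sufficiency argument: when the protected user $m$ satisfies $s_{m,i}/\phi_m = s_{n,i}/\phi_n$ exactly, the protection clause in Definition \ref{PS_DSF_Def} includes $m$ (it uses $\le$), so I must verify that any feasible increase of $x_n$ genuinely requires decreasing some protected $x_{m,i}$ rather than merely rearranging tasks across servers in a way that dodges the tight constraint. Since $x_n$ is a sum over all servers and only eligible servers can carry its tasks, the tightness of \eqref{FC2_1} on \emph{every} server (not just one) is what closes this gap, so I would lean on the fact that necessity already established equality on all used servers and that unused eligible servers cannot absorb extra load without violating feasibility elsewhere. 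Confirming that the single-scalar TDM constraint leaves no feasible direction of increase for $x_n$ is the crux of the argument.
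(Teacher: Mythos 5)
Your overall plan---two directions, exploiting the tightness of the single scalar TDM constraint per server, plus an exchange argument for \eqref{Per_server_cond2}---is exactly the paper's route, but the inequalities are repeatedly written in the wrong direction, and this breaks the logic rather than being a cosmetic slip. Condition \eqref{Per_server_cond2} says that every user $m$ active on server $i$ (i.e., with $x_{m,i}>0$) satisfies $s_{m,i}/\phi_m\le s_{n,i}/\phi_n$ for every $n$: active users sit at the \emph{minimum} normalized VDS on that server. In your sufficiency argument you assert the opposite (``every active user $m$ satisfies $s_{m,i}/\phi_m\ge s_{n,i}/\phi_n$'') and then claim that a strict inequality ``contradicts the protection clause''; but Definition \ref{PS_DSF_Def} only protects users with $s_{m,i}/\phi_m\le s_{n,i}/\phi_n$, so decreasing a user with strictly larger normalized VDS would be a \emph{legitimate} improvement and yields no contradiction. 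The correct, one-line conclusion is the paper's: tightness forces a decrease of some $x_{m,i}>0$, and \eqref{Per_server_cond2} says every such $m$ has $s_{m,i}/\phi_m\le s_{n,i}/\phi_n$, which is precisely the blocking condition in the definition.

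The necessity direction is where the error is fatal. The negation of \eqref{Per_server_cond2} is the existence of some $n$ and some $m$ with $x_{m,i}>0$ and $s_{m,i}/\phi_m>s_{n,i}/\phi_n$; you instead posit $s_{m,i}/\phi_m<s_{n,i}/\phi_n$ (which is consistent with \eqref{Per_server_cond2}, not its negation) and then transfer time-share from $m$ to $n$. That move increases $x_n$ while decreasing exactly a user with $s_{m,i}/\phi_m\le s_{n,i}/\phi_n$---a move Definition \ref{PS_DSF_Def} explicitly tolerates as a blocker, not one it ``forbids''---so you have derived no contradiction and established nothing. The paper runs the transfer the other way: take the active user $p$ with $s_{p,i}/\phi_p>s_{n,i}/\phi_n$, shift a small amount of its time-share to $n$ (keeping \eqref{FC2_1} tight), and observe that the only decreased user has strictly larger normalized VDS, so no protected user is harmed; that is the configuration contradicting PS-DSF. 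Your equality-in-\eqref{FC2_1} subargument and your closing observation that tightness on every server rules out rearranging load across servers are both fine and match the paper.
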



The proofs are given in the appendix.
These conditions will be useful in determining a PS-DSF allocation (see Section \ref{sec:PS-DFS:algo}).
In the following we examine different sharing properties that are satisfied under PS-DSF.
In case of a heterogeneous system with placement constraints,
we will need to extend the notion of \emph{Sharing Incentive}, \emph{Strategy Proofness} and \emph{Bottleneck Fairness}. Other properties,
\emph{Envy Freeness}, \emph{Pareto Optimality} and \emph{Single Resource Fairness}, will follow the same definitions as described in Section \ref{sec:bg:drf}.

The generalization of sharing incentive property is straightforward. We consider a \emph{uniform allocation} which allocates $\phi_n/\sum_m\phi_m$ portion of the resources on each server (whether this server is eligible or not) to each user $n$.
An allocation is said to satisfy \emph{sharing incentive}, when each user is able to run more tasks compared to such uniform allocation.

For the strategy proofness property, we may note that \emph{we assume} each user to declare its demand vector and also the set of eligible servers. We say that an allocation satisfies \emph{strategy proofness} when users may not increase their utilization by lying about their resource demands or the set of eligible servers.

Finally, a resource is considered as a bottleneck in the whole system
when it is dominantly requested by each user from every eligible server.
If there is a bottleneck resource, then the allocation should satisfy max-min fairness w.r.t. such resource.

\begin{theorem}\label{Properties}
PS-DSF allocation mechanism (whether based on RDM or TDM) satisfies single resource fairness, bottleneck fairness,
envy freeness, and sharing incentive properties. It also satisfies Pareto optimality and strategy proofness in case of TDM.
\end{theorem}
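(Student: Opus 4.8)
The plan is to derive every property from the per-server certificates of Theorems \ref{NS_condition} and \ref{NS_condition2}, which reduce optimality to a local comparison of the normalized virtual dominant shares $s_{n,i}/\phi_n$ (for TDM) or to the existence of a saturated bottleneck resource per eligible server (for RDM). I expect four of the properties --- single resource fairness, bottleneck fairness, envy freeness, and Pareto optimality --- to follow cleanly from these certificates, while sharing incentive and strategy proofness will be the two delicate cases.

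For envy freeness the key ingredient is a combinatorial inequality relating the monopolization counts of two users on a common server. Writing $\gamma_{n,i}=\min_r c_{i,r}/d_{n,r}=c_{i,\rho(n,i)}/d_{n,\rho(n,i)}$ and bounding $\gamma_{m,i}=\min_r c_{i,r}/d_{m,r}\le c_{i,\rho(n,i)}/d_{m,\rho(n,i)}$ gives
\[
\frac{\gamma_{n,i}}{\gamma_{m,i}}\ \ge\ \frac{d_{m,\rho(n,i)}}{d_{n,\rho(n,i)}}\ \ge\ \min_r\frac{d_{m,r}}{d_{n,r}} .
\]
On a server $i$ that can be used to anchor the comparison, the certificate yields $s_{n,i}/\phi_n\ge s_{m,i}/\phi_m$, i.e. $x_n/(\phi_n\gamma_{n,i})\ge x_m/(\phi_m\gamma_{m,i})$; multiplying by the displayed bound gives $x_n\ge(\phi_n/\phi_m)\,x_m\min_r d_{m,r}/d_{n,r}=U_n(\tfrac{\phi_n}{\phi_m}{\bf a}_m)$, which is exactly envy freeness. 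For TDM the inequality $s_{n,i}/\phi_n\ge s_{m,i}/\phi_m$ holds on any server serving $m$ by Theorem \ref{NS_condition2}; for RDM I would instead take a server on which $m$ consumes user $n$'s bottleneck resource and invoke Definition \ref{Def_bottleneck}. The point needing care is the existence of such an anchor server; I would dispose of the degenerate cases (where $m$ is served only on servers where $n$ is barred or lacks a required resource) by observing that the valuation $U_n(\tfrac{\phi_n}{\phi_m}{\bf a}_m)$ then collapses to $0$. Single resource fairness and bottleneck fairness I would obtain by a common substitution: when a resource $b$ is dominant for every user on every eligible server, $\rho(n,i)=b$, so $s_{n,i}=x_n d_{n,b}/c_{i,b}$ is precisely the normalized amount of $b$ allocated, and saturation of \eqref{FC2_1} forces $\sum_m x_{m,i}d_{m,b}=c_{i,b}$; the certificate then reads as the textbook characterization of weighted max--min fairness of resource $b$ under placement constraints. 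Single resource fairness is the special case $M=1$.

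For Pareto optimality under TDM I would use the elementary fact that a point maximizing a strictly positive weighted sum of the objectives is Pareto efficient. Take $\alpha_n=\phi_n/x_n>0$ and consider $\max\sum_{n,i}\alpha_n x_{n,i}$ over the TDM polytope $\{x_{n,i}\ge0:\sum_n x_{n,i}/\gamma_{n,i}\le1\ \forall i\}$. This separates across servers into fractional-knapsack problems whose value-to-weight ratio for user $n$ on server $i$ is $\alpha_n\gamma_{n,i}=(s_{n,i}/\phi_n)^{-1}$, so an optimal solution places mass only on the users minimizing $s_{n,i}/\phi_n$. By Theorem \ref{NS_condition2} the PS-DSF allocation serves exactly those users and saturates every server, hence attains this maximum; any allocation weakly dominating it with a strict gain would give a strictly larger weighted sum, a contradiction.

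The hard part will be sharing incentive and strategy proofness. For sharing incentive I would first verify that the uniform allocation $x^{u}_{n,i}=(\phi_n/\sum_m\phi_m)\,\gamma_{n,i}$ is feasible, since its per-server load is $\sum_{n:\delta_{n,i}=1}\phi_n/\sum_m\phi_m\le1$, and then show PS-DSF weakly dominates it, $x_n\ge x^{u}_n=(\phi_n/\sum_m\phi_m)\sum_i\gamma_{n,i}$. The obstacle is that a purely per-server bound only yields $x_n\ge(\phi_n/\sum_m\phi_m)\gamma_{n,i}$ for each eligible $i$ in isolation (the maximum over servers, not the sum), whereas the benchmark aggregates additively; because placement constraints decouple the servers, the argument must be global. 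I expect the right route is a progressive-filling reading of Definition \ref{PS_DSF_Def}: the filling process passes through the uniform state and then keeps consuming the slack left on every server whose eligible set is a strict subset of all users, so no user can freeze below its uniform entitlement; making the freezing order and the residual-slack accounting rigorous is the crux. Strategy proofness (TDM) I would establish by a monotonicity argument, treating a misreported demand vector and a misreported eligibility set separately: shrinking the eligible set only removes placement options, while padding it with servers the user cannot truly use yields allocations that generate no real tasks. Exploiting that PS-DSF restricted to a single server coincides with DRF, I would lift the DRF strategy-proofness argument server by server and control the induced change in the water levels; the principal difficulty is that a misreported demand perturbs $\rho(n,i)$ and $\gamma_{n,i}$ on all servers simultaneously, while the $x_{n,i}$ remain coupled through $x_n=\sum_i x_{n,i}$.
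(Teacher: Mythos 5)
Your treatment of single resource fairness, bottleneck fairness, and envy freeness matches the paper's: the same substitution $s_{n,i}=x_n d_{n,b}/c_{i,b}$ for the first two, and for envy freeness the same chain $\min_r d_{m,r}/d_{n,r}\le d_{m,\rho(n,i)}/d_{n,\rho(n,i)}\le\gamma_{n,i}/\gamma_{m,i}$ anchored at a server where $m$ consumes $n$'s bottleneck resource. Your Pareto optimality argument is genuinely different and, as far as I can check, correct: the paper proves it by an explicit two-server exchange lemma (showing any reallocation of tasks between a pair of servers decreases someone's total), whereas you observe that PS-DSF maximizes the weighted sum $\sum_n(\phi_n/x_n)x_n$ over the TDM polytope because the problem decouples into per-server fractional knapsacks whose ratio test is exactly ``minimize $s_{n,i}/\phi_n$,'' which is the certificate of Theorem~\ref{NS_condition2}. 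This is cleaner and more modular than the paper's exchange lemma; the only housekeeping it needs is the observation that $x_n>0$ for every user with an eligible server (which follows from saturation of \eqref{FC2_1} plus \eqref{Per_server_cond2}), so the weights $\phi_n/x_n$ are well defined.

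The genuine gaps are sharing incentive and strategy proofness --- precisely the two properties you defer. For sharing incentive you correctly diagnose that per-server certificates only bound $x_n$ against $(\phi_n/\sum_m\phi_m)\gamma_{n,i}$ for each $i$ separately, not against the sum $\sum_i\gamma_{n,i}$, but your proposed fix rests on the claim that a progressive-filling realization of Definition~\ref{PS_DSF_Def} ``passes through the uniform state.'' That is false in general: under the uniform allocation user $n$'s VDS at server $i$ is $(\phi_n/\sum_m\phi_m)\sum_j\gamma_{n,j}/\gamma_{n,i}$, which varies across users and servers, so the uniform allocation is not an equal-water-level configuration and no natural filling trajectory visits it. The paper instead normalizes each demand vector by $\sum_j\gamma_{n,j}$ (so the uniform benchmark becomes $\hat x_n^{unif}=\phi_n/\sum_m\phi_m$) and argues by induction on $N$: an explicit ordering-and-exchange computation for $N=2$ using the servers sorted by $\hat\gamma_{n,j}/\hat\gamma_{m,j}$, then an inductive step that lets a distinguished user improve its uniform share by pairwise exchanges, freezes it, and applies the hypothesis to the remaining $N_0$ users. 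Some global argument of this kind is unavoidable; your sketch does not supply one.

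For strategy proofness your plan of lifting the single-server DRF argument server by server runs into exactly the coupling you flag ($x_n=\sum_i x_{n,i}$, and a misreport perturbs every $\gamma_{n,i}$ simultaneously), and you do not resolve it. The paper's argument is global and derives strategy proofness \emph{from} Pareto optimality: if the liar truly gains, i.e.\ $U_n({\bf a}'_n)>x_n$, then since $U_n({\bf a}'_n)\le x'_n\gamma_{n,i}/\gamma'_{n,i}$ its true VDS strictly increases at \emph{every} server; Theorem~\ref{NS_condition2} then forces the per-server minimum level $s_i=\min_m s_{m,i}/\phi_m$ to weakly increase at every server, hence every other user's allocation weakly increases, contradicting Pareto optimality of the truthful allocation. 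This monotonicity-of-the-water-levels step is the missing idea, and it is also why the theorem restricts strategy proofness to TDM (where Pareto optimality is available).
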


The proof is given in the appendix.
Unfortunately PS-DSF does not satisfy Pareto optimality in case of RDM.
This is the reason why strategy proofness is \emph{not generally satisfied} in case of RDM.
The following lemma describes the behaviour of PS-DSF allocation mechanism from this respect.
\begin{lemma}\label{strategy_proofness}
Assume that all users demand all type of resources, that is $d_{n,r}>0,~\forall n,r$.
Under the PS-DSF allocation mechanism with RDM,
each user cannot decrease the utilization of other users by lying about its resource demands or the set of eligible
servers, without decreasing its own utilization.
\end{lemma}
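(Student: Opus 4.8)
The plan is to prove the contrapositive-flavored statement directly: show that if some user (say user $k$) misreports either its demand vector $\mathbf{d}_k$ or its eligibility set $\{\delta_{k,i}\}$, then under the resulting PS-DSF allocation, no honest user $n \neq k$ can have its utilization $x_n$ strictly decreased unless $x_k$ also strictly decreases. First I would fix notation: let $\{x_{n,i}\}$ denote the true PS-DSF allocation (under truthful reporting) and let $\{x'_{n,i}\}$ denote the PS-DSF allocation that results when user $k$ reports a false demand $\mathbf{d}'_k$ and/or false eligibility $\delta'_{k,i}$. The quantities $\gamma_{k,i}$, $\rho(k,i)$, and hence the VDS $s_{k,i}$ are all computed from the reported demand, so under lying these become $\gamma'_{k,i}$, $s'_{k,i}$, etc., while for every honest user these quantities are unchanged. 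The assumption $d_{n,r}>0$ for all $n,r$ guarantees every server is eligible for every truthful user and that each $\rho(n,i)$ and $\gamma_{n,i}$ is well-defined and strictly positive, which I will use to ensure saturation arguments go through.

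The key structural tool is Theorem~\ref{NS_condition}: both allocations $\{x_{n,i}\}$ and $\{x'_{n,i}\}$ satisfy PS-DSF based on RDM, so each admits, for every user and every eligible server, a bottleneck resource in the sense of Definition~\ref{Def_bottleneck}. The plan is to argue by contradiction. Suppose for contradiction that the lie strictly decreases some honest user's utilization, i.e. $x'_n < x_n$ for some honest $n$, while $x'_k \ge x_k$ (user $k$ does not hurt itself). I would then track the saturation of resources server by server. For each honest user $n$ whose allocation drops, pick a server $i$ on which $x'_{n,i} < x_{n,i}$; its bottleneck resource $r$ under the true allocation was saturated, $\sum_m x_{m,i} d_{m,r} = c_{i,r}$, with $s_{n,i}/\phi_n \ge s_{m,i}/\phi_m$ for all $m$ consuming $r$ on server $i$. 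The strategy is to show that the only way to free up room for honest user $n$ to lose tasks while another honest user gains (or $k$ gains) is through a chain of VDS inequalities that, using the bottleneck characterization and the unchanged $\gamma_{n,i}$ of honest users, forces $x'_k < x_k$, contradicting the supposition.

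Concretely, I would set up a comparison of the two feasible allocations on each server using the RDM constraint $\sum_n x_{n,i} d_{n,r} \le c_{i,r}$, and consider the signed differences $\Delta x_{n,i} = x'_{n,i} - x_{n,i}$. Summing the per-resource saturation equalities and weighting by the bottleneck structure, I expect to extract a conservation/exchange argument: on any server where an honest user loses tasks, some other user must gain on the same saturated resource, and the VDS ordering at the bottleneck (Definition~\ref{Def_bottleneck}) dictates who can gain at whose expense. Because honest users' $\gamma_{n,i}$ are invariant, the only degree of freedom the liar introduces is the altered $\gamma'_{k,i}$; I would show this can only \emph{redistribute} tasks in a way that, when it reduces an honest user's total, must simultaneously reduce $x_k$, since the liar cannot both occupy more of a bottleneck resource and run at least as many tasks as before without a genuinely more efficient true demand — which by definition of $\gamma_{k,i}$ as the monopolizing task count it does not possess.

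The hard part will be the bookkeeping of the exchange argument across multiple servers simultaneously: a single honest user may lose tasks on one server but the beneficiaries of the freed resources may be spread across several servers, and the bottleneck resource can differ from server to server and between the two allocations. Handling this cleanly will likely require an augmenting-path style argument over a bipartite user-server structure, chaining the inequalities $s_{n,i}/\phi_n \ge s_{m,i}/\phi_m$ along the path until the chain must terminate at user $k$ (the only user whose effective efficiency changed), thereby forcing $x'_k < x_k$. I would also need to carefully invoke the non-wastefulness $\mathbf{a}_{n,i}=x_{n,i}\mathbf{d}_n$ and the all-positive-demand assumption to rule out degenerate cases where a resource is saturated without any user actually consuming it, so that every saturated bottleneck genuinely participates in the exchange chain.
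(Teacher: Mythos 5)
Your plan diverges from the paper's proof and, as written, has real gaps. The paper does not run an exchange/augmenting-path argument at all. It observes that the hypothesis $d_{n,r}>0$ for all $n,r$ makes every saturated resource at a server one that is consumed by every user active there, so Definition~\ref{Def_bottleneck} collapses to the single per-server condition \eqref{Per_server_cond2}: all users $m$ with $x_{m,i}>0$ share the minimum normalized VDS $s_i=\min_n s_{n,i}/\phi_n$ (the VDSL of \eqref{VDSL}). The RDM problem then inherits the TDM structure, and the proof is the one already given for strategy proofness under TDM: if the liar's true utilization does not drop, the inequality $\min_r d'_{k,r}/d_{k,r}\le\gamma_{k,i}/\gamma'_{k,i}$ forces $s'_{k,i}\ge s_{k,i}$ at every server, whence the VDSL is non-decreasing at every server touched by a user who improved, and trivially non-decreasing elsewhere; $s'_i\ge s_i$ for all $i$ then yields $x'_m\ge x_m$ for every honest $m$, which is the lemma. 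You treat the all-positive-demands hypothesis as merely excluding degenerate saturation, which misses its actual role --- it is what licenses a single per-server VDS level and makes the whole argument one-dimensional per server.

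Two further concrete problems. First, you encode ``user $k$ does not hurt itself'' as $x'_k\ge x_k$, but under a false demand ${\bf d}'_k$ the liar's true utilization is $U_k({\bf a}'_k)=x'_k\min_r d'_{k,r}/d_{k,r}$, not $x'_k$; without the conversion inequality above you cannot relate the liar's standing in the VDS order under the manipulated allocation to its genuine utilization, so the statement you would end up proving is not the lemma. Second, the step where your chain of inequalities ``must terminate at user $k$'' is asserted, not argued: servers are coupled only through the totals $x_n=\sum_i x_{n,i}$ entering $s_{n,i}$, not through any physical transfer of resources between servers, and the bottleneck resource may differ between the two allocations, so it is unclear why a decrease for an honest user must propagate along your bipartite structure to $k$ rather than being absorbed by a reshuffling among honest users. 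This is precisely the difficulty the paper's VDSL-monotonicity argument sidesteps; I recommend reworking your proof around that per-server quantity rather than trying to complete the exchange argument.
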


For the proof refer to the appendix.


\subsection{PS-DSF Allocation Algorithm}\label{sec:PS-DFS:algo}
In this subsection, we present an algorithm which realizes the PS-DSF allocation
in the case of RDM\footnote{A simplified version of this algorithm can be used in the case of TDM.}.
According to Theorem~\ref{NS_condition}, an allocation satisfies PS-DSF when every user has a bottleneck resource w.r.t. every eligible server.
Let $\mathcal{N}_i$ denote the set of users for which $\gamma_{n,i}>0$. The following corollary describes a condition to check whether a saturated resource serves as a bottleneck for user $n$ w.r.t. server $i$.

\begin{corollary}\label{corollary_1}
If $r$ is saturated at server $i$ and
\be
n\in\argmin_{m\in\mathcal{N}_i}\{\frac{s_{m,i}}{\phi_m}\mid d_{m,r}>0\}
\ee
Then, $r$ is a bottleneck resource for user $n$ at server $i$ when:
\be
\frac{s_{m,i}}{\phi_m}>\frac{s_{n,i}}{\phi_n},~d_{m,r}>0~\Rightarrow~x_{m,i}=0.\label{ch_cond}
\ee
\end{corollary}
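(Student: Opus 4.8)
The plan is to verify directly the three defining requirements of a bottleneck resource from Definition~\ref{Def_bottleneck} for user $n$ and server $i$: that (i) $d_{n,r}>0$, (ii) $r$ is saturated at server $i$, and (iii) $s_{n,i}/\phi_n \ge s_{m,i}/\phi_m$ for every user $m$ with $x_{m,i}d_{m,r}>0$. Requirement (ii) is immediate, since saturation of $r$ is assumed outright. Requirement (i) follows at once from the hypothesis $n\in\argmin_{m\in\mathcal{N}_i}\{s_{m,i}/\phi_m \mid d_{m,r}>0\}$, because the index set over which this minimum is taken consists only of users demanding $r$, so in particular $d_{n,r}>0$. I would also note that this same hypothesis gives $n\in\mathcal{N}_i$, hence $\gamma_{n,i}>0$, so server $i$ is indeed eligible for user $n$, as the definition requires.

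The substance of the argument is requirement (iii), and the apparent difficulty is a sign mismatch: the hypothesis places $n$ at the \emph{minimum} of the normalized virtual dominant shares among users demanding $r$, whereas the bottleneck condition asks $n$ to be \emph{maximal} among the users with positive allocation of $r$. I would resolve this tension using condition~\eqref{ch_cond}. Fix any $m$ with $x_{m,i}d_{m,r}>0$, so that both $x_{m,i}>0$ and $d_{m,r}>0$. Because $n$ minimizes the normalized virtual dominant share over users demanding $r$ at server $i$, we have $s_{n,i}/\phi_n \le s_{m,i}/\phi_m$. If this were a strict inequality, then $m$ would meet the premise of~\eqref{ch_cond} (strictly larger normalized VDS together with $d_{m,r}>0$), forcing $x_{m,i}=0$ and contradicting $x_{m,i}>0$. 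Hence it must be an equality, $s_{m,i}/\phi_m = s_{n,i}/\phi_n$, which in particular yields $s_{n,i}/\phi_n \ge s_{m,i}/\phi_m$, as needed for (iii).

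Collecting the three verified requirements shows that $r$ is a bottleneck resource for user $n$ at server $i$, completing the proof. The only genuine obstacle is the direction of the inequality treated in the second paragraph; once one recognizes that~\eqref{ch_cond} eliminates precisely those contributors to $r$ whose normalized VDS strictly exceeds that of $n$, the minimizing user $n$ ends up tied with every remaining user that loads resource $r$, and the bottleneck condition holds with equality throughout. No feasibility or saturation estimate beyond the stated hypotheses is required.
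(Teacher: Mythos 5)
Your proof is correct: the paper states Corollary~\ref{corollary_1} without proof, and your direct verification of the three conditions of Definition~\ref{Def_bottleneck} --- with the key observation that \eqref{ch_cond} turns the argmin (a $\le$) into the required $\ge$ by forcing equality for every $m$ with $x_{m,i}d_{m,r}>0$ --- is exactly the intended argument. The only step left implicit is that $x_{m,i}>0$ implies $m\in\mathcal{N}_i$ (feasibility forces $\delta_{m,i}=1$, hence $\gamma_{m,i}>0$), which is needed so that $m$ lies in the index set of the argmin; this is immediate and does not constitute a gap.
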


To find a PS-DSF allocation, we may apply an iterative algorithm beginning with an initial allocation.
Assume that servers are indexed from 1 to $K$. Starting from the first server,
the proposed algorithm sequentially updates the allocation for different servers,
so that at the end a bottleneck resource is identified for every user w.r.t. every eligible server.
In the following we describe the procedure for updating the allocation at each server.

Specifically, for each server $i$ let $\mathcal{N}_i$ initially denote the set of users for which $\gamma_{n,i}>0$.
Given a feasible allocation, $\{x_{n,i}\}$, find $S_i^*$ as the \emph{minimum VDS} at server $i$:
\be
S_i^*:=\min_{m\in\mathcal{N}_i}\{\frac{s_{m,i}}{\phi_m}\}.\label{Fst_lvl}
\ee
The set of users achieving the minimum in \eqref{Fst_lvl} is denoted by $\mathcal{N}_i^*$.
Let $\mathcal{R}_i^*$ denote the set of saturated resources at server $i$ for which $d_{n,r}>0$ for some user $n\in\mathcal{N}_i^*$.
These resources are the potential bottleneck resources for users $n\in\mathcal{N}_i^*$.
If the condition in Corollary~\ref{corollary_1} is satisfied for some resource $r^*\in\mathcal{R}_i^*$,
then this resource serves as the bottleneck for users $n\in\mathcal{N}_i$ with $d_{n,r^*}>0$.
In this case, we restrict our attention to the users for which no bottleneck resource is identified w.r.t. server $i$.
Specifically, $\mathcal{N}_i$ is updated to:
\be
\mathcal{N}_i=\mathcal{N}_i-\{n\mid d_{n,r^*}>0\}.
\ee

When the condition in Corollary~\ref{corollary_1} is not satisfied for any resource $r\in\mathcal{R}_i^*$,
the algorithm updates the allocation for server $i$.
Specifically, for every resource $r\in\mathcal{R}_i^*$, a user $n_r$ is chosen such that:
\be
n_r\in\argmax_{n\in\mathcal{N}_i}\{\frac{s_{n,i}}{\phi_n}\mid x_{n,i}d_{n,r}>0\}.
\ee
If we release the whole allocated resources to these users from server $i$,
the maximum potential increase in $S_i^*$ is given by $z^*$ (see the \emph{Update-Allocation} subroutine in Algorithm \ref{table1}).
To make sure that $S_i^*$ is monotonically increasing, $\beta\in(0,1]$ is chosen such that $S_i^*+\beta z^*$ remains
less than or equal to the updated VDS w.r.t. server $i$ for all users $n_r,~r\in\mathcal{R}_i^*$.


For each server $i$, the above procedure is repeated until $\mathcal{N}_i$ becomes empty.
At the end of this procedure, a bottleneck resource is identified for every user eligible to be served by server $i$.
However, the subsequent updates for the next servers, may violate this condition for server $i$ and the previous servers.
Hence, we repeat the whole process for all servers, until no more update is possible for any of the servers\footnote{Convergence properties of this algorithm will be studied in our future work.}.
This process is described in Algorithm I.
%

\begin{table}
\captionsetup{labelformat=empty}
\caption{Algorithm I: PS-DSF Allocation Algorithm}
\label{table1}
\begin{tabular}{p{8.48cm}}
\hline\noalign{\smallskip}
\squishlist
\item[] {\bf Initialization}
\squishend
$\quad$ Initially allocate available resources by applying DRF individually to each server.

\squishlist
\item[] {\bf The main subroutine}
\squishend
$\quad$ {\bf while }($1$)\\
$\quad$ $\qquad$Last-round-flag$~:=1$\\
$\quad$ $\qquad${\bf for }$(i=1;~i\le K;~i++)$\\
$\quad$ $\qquad\qquad\mathcal{N}_i:=\{n\in\mathcal{N}\mid\gamma_{n,i}>0\}$.\\
$\quad$ $\qquad\qquad${\bf while }($\mathcal{N}_i\neq\emptyset$)\\
$\quad$ $\qquad\qquad\qquad$Find $S_i^*$ according to \eqref{Fst_lvl}.\\
$\quad$ $\qquad\qquad\qquad$Identify $\mathcal{N}_i^*$ as the set of users achieving the minimum in \eqref{Fst_lvl}.\\
$\quad$ $\qquad\qquad\qquad$Identify $\mathcal{R}_i^*$ as the set of saturated resources at server $i$ for which\\
$\quad$ $\qquad\qquad\qquad$$d_{n,r}>0$ for some $n\in\mathcal{N}_i^*$.\\
$\quad$ $\qquad\qquad\qquad${\bf If }$(S_i^*=\max_{n\in\mathcal{N}_i}\{\frac{s_{n,i}}{\phi_n}\mid x_{n,i}d_{n,r^*}>0\},$~for $r^*\in\mathcal{R}_i^*)$\\
$\quad$ $\qquad\qquad\qquad\qquad$Update $\mathcal{N}_i=\mathcal{N}_i-\{n\mid d_{n,r^*}>0\}$\\
$\quad$ $\qquad\qquad\qquad${\bf else }\\
$\quad$ $\qquad\qquad\qquad\qquad$Last-round-flag$~=0$\\
$\quad$ $\qquad\qquad\qquad\qquad$Call \emph{Update-Allocation$({\bf x},i)$}.\\
$\quad$ $\qquad${\bf If }(Last-round-flag$~=1$)\\
$\quad$ $\qquad\qquad$break;

\squishlist
\item[]{\bf Update-Allocation$({\bf x},i)$ subroutine}
\squishend

$\quad$Identify ${\bf f}_i=[f_{i,r}]$ as the amount of unallocated resources under ${\bf x}$.\\
$\quad${\bf for} ($r\in\mathcal{R}_i^*$)\\
$\quad$ $\qquad$Choose $n_r\in\argmax_{n\in\mathcal{N}_i}\{\frac{s_{n,i}}{\phi_n}\mid x_{n,i}d_{n,r}>0\}.$\\
$\quad$ $\qquad$Update ${\bf f}_i={\bf f}_i + x_{n_r,i}{\bf d}_{n_r}$.\\\vspace{+1mm}
$\quad$Find $D_i^*:=\sum_{n\in\mathcal{N}_i^*}\phi_n\gamma_{n,i}{\bf d}_n$.\\
$\quad$Find $z^*:=\min_r\frac{f_{i,r}}{D^*_{i,r}}$.\\
$\quad$Choose $\beta\in(0,1]$ such that: $S_i^*+\beta z^*\le\frac{x_{n_r}-\beta x_{n_r,i}}{\phi_{n_r}\gamma_{n_r,i}},~\forall r\in\mathcal{R}_i^*.$\\\vspace{+0.1mm}
$\quad$Update $x_{n,i}=x_{n,i}+\beta\phi_n\gamma_{n,i}z^*,~\forall n\in\mathcal{N}_i^*$.\\\vspace{+0.1mm}
$\quad$Set $x_{n_r,i}=(1-\beta)x_{n_r,i}~\forall r\in\mathcal{R}_i^*$.\\
\noalign{\smallskip}\hline
\end{tabular}
\end{table}

\subsection{Distributed Implementation}\label{sec:main:DI}

One of the advantages of the PS-DSF allocation mechanism is that it locally identifies the dominant resource for every user w.r.t. each server,
without any knowledge of the available resources on the other servers, as opposed to existing allocation mechanisms which
need to \emph{globally} identify dominant resource and/or dominant share for each user.
This is of great importance from a practical point of view, as we may develop a distributed algorithm to find the PS-DSF allocation.

Specifically, consider the inner while-loop in the main sub-routine of Algorithm I which we refer to as \emph{``server procedure"}.
According to this procedure the allocated tasks to different users from each server $i$ are updated only based on the knowledge of the available resources on server $i$ and the total allocated tasks to each user. Accordingly, we may come up with a distributed version of Algorithm I where each server individually (and even asynchronously) executes the server procedure every $T$ seconds.
When $T$ is chosen sufficiently smaller than period of changes in a cluster (like changes in the set of active users and/or servers),
such distributed algorithm may dynamically achieve the PS-DSF allocation. We implement this algorithm in our experiments in Section~\ref{sec:eval}.

\section{Extensions}
In this section we present an extension which directly follows from our proposed approach in Section \ref{sec:main}.
Specifically, consider the case where the effective capacity of resources on each server may vary for different users.
In this case, that is unclear how to define a global dominant resource for each user 
even when there are no placement constraints. However, according to the formulation in Section \ref{sec:main}, we may define $\gamma_{n,i}$ as the number of tasks which could be executed by user $n$ when monopolizing server $i$. We may also define the VDS for every user w.r.t. each server in the same way, and then find an allocation which satisfies PS-DSF. To gain more intuition, we consider two specific example scenarios in the following.

{\bf Example scenario 1:}
First consider a simple scenario where only one type of resource, notably bandwidth, is available on different servers.
Specifically, we may consider different servers as different frequency channels which are subject to multi-user diversity in a wireless system.
For instance, consider the example in Figure~\ref{fig:example3} where two users share three wireless channels.
Without the insight of our proposed approach, that is unclear how to allocate the capacity of servers among different users in a fair manner,
as we may not weigh different servers with respect to each other.

Let define the utility of each user, $x_n$, as the number of bits which are given service in one second (i.e. the service rate).
Also define $\gamma_{n,i}$ as the achievable service rate by user $n$ when monopolizing server $i$.
For the example in Figure~\ref{fig:example3} PS-DSF results in allocating the first channel (the third channel respectively) to the first user (second user), while the second channel is equally shared between the two users. Accordingly, user 1 gets a service rate of 1.5Mb/s while user 2 gets 1Mb/s.
It can be observed that $x_{n}$ can not be increased for any user without decreasing $x_{m,i}$ while $x_m/\gamma_{m,i}\le x_n\gamma_{n,i}$.

\begin{figure}[h!]
    \centering
\includegraphics[width=2.5in]{./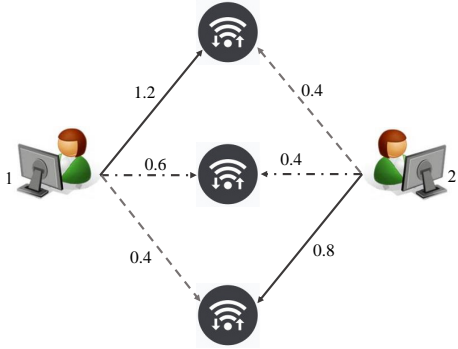}
\centering
\caption{An example with two equally weighted users sharing three frequency channels.
The achievable service rates by each user over different channels are shown beside the arrows (in Mb/s).}
\label{fig:example3}
\end{figure}
\vspace{-3mm}

{\bf Example scenario 2:}
Consider a set of heterogeneous servers in a computing cluster, where each server consists of different types of resources, such as CPU, RAM, bandwidth, etc.
Although the CPU on each server has a fixed physical capacity, different users may experience different effective processing capacities when specific  co-processors are available at a server.
Coprocessors are supplementary processing units which are specialized for specific arithmetics or other processing purposes.
As a result, they might be useful only for some users, for which they accelerate processing performance.

Our approach in Section \ref{sec:main} could be readily extended to incorporate the effect of coprocessors.
Assume that ${\bf d}_n=[d_{n,r}]$ describes the demand of user $n$ from each type of resource when no co-processor is utilized.
Let $\gamma_{n,i}$ denote the maximum number of tasks which could be executed by user $n$ when monopolizing server $i$ and utilizing any available co-processor. Given the insight of PS-DSF, we may find an allocation, $\{x_{n,i}\}$, such that for every user, $x_n$ cannot be increased while maintaining feasibility without decreasing $x_{m,i}$ for some user $m$ and sever $i$ with $x_m/\phi_m\gamma_{m,i}\le x_n/\phi_n\gamma_{n,i}$.
This problem will be studied in more details in our future work.

\section{Numerical Results}\label{sec:eval}
In this section we evaluate performance of the PS-DSF allocation mechanism through some numerical experiments.
In our simulations, we consider a cluster with four different classes of servers (120 servers in total), where the configuration of servers are
drawn from the distribution of Google cluster servers \cite{Google11}.
It is assumed that the available resources over each server can be partitioned in any arbitrary way.
We consider four users where the last two users may run their tasks only by the last two classes of servers (see Figure \ref{fig:num_exmp}).

\begin{figure}[h!]
\centering
\includegraphics[width=2.3in]{./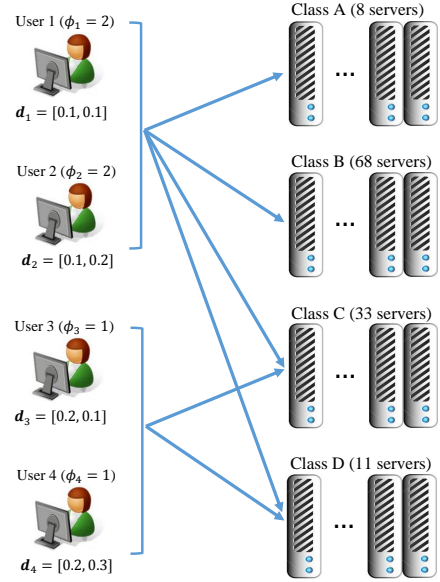}
\footnotesize
\caption{\footnotesize A cluster with four classes of servers (120 servers in total) and four users. The weight of the first two users is twice the weight of the last two users. The configurations of resources (CPU and memory respectively) for servers of each class are as follows: $C_A=[1,1],~C_B=[0.5,0.5],~C_C=[0.5,0.25],~C_D=[0.5,0.75]$, where CPU and memory units for each server are normalized w.r.t. the servers of the first class.}
\label{fig:num_exmp}
\end{figure}
\vspace{-1mm}


The number of tasks that each user may run when monopolizing each class of servers are given in Table \ref{table3}.
Assume that all users are active. The PS-DSF (based on RDM) and the TSF allocations in this case are given in Table \ref{table4}.
Under both allocations the servers of the first two classes (the second two classes respectively) are allocated to the first (the last) two users.
According to the PS-DSF allocation, the servers of the third class (the fourth class respectively) are entirely allocated to the third user (the fourth user), which results in maximizing the minimum VDS w.r.t. these servers.

\begin{table}
\footnotesize
\caption{The total number of tasks that each user may run when monopolizing each class of servers.}
\label{table3}
\centering
\begin{tabular}{|c|c|c|c|c|}
\hline
{\bf $\gamma_{n,i}$}   &   Class A   &   Class B   &   Class C   &   Class D \\
\hline
User 1                 &     80      &     340     &     82.5    &     55    \\
\hline
User 2                 &     40      &     170     &     41.25   &     41.25 \\
\hline
User 3                 &      0      &      0      &     82.5    &     27.5  \\
\hline
User 4                 &      0      &      0      &     27.5    &     27.5  \\
\hline
\end{tabular}
\footnotesize
\caption{The total number of tasks allocated to each user from each class of servers under PS-DSF and TSF allocations.}
\label{table4}
\centering
\begin{tabular}{|c|c|c|c|c|}
\hline
{\bf PS-DSF}   &   Class A   &   Class B   &   Class C   &   Class D\\
\hline
User 1      &    40       &      170    &      0      &      0   \\
\hline
User 2      &    20       &      85     &      0      &      0    \\
\hline
User 3      &    0        &      0      &      82.5   &      0    \\
\hline
User 4      &    0        &      0      &       0     &      27.5 \\
\hline\noalign{\medskip}
\hline
{\bf TSF}   &   Class A   &   Class B   &   Class C   &   Class D\\
\hline
User 1      &    35       &      170    &      0      &      0   \\
\hline
User 2      &    22.5     &      85     &      0      &      0    \\
\hline
User 3      &    0        &      0      &      58.33   &      0    \\
\hline
User 4      &    0        &      0      &      8.05     &      27.5 \\
\hline
\end{tabular}
\end{table}

Intuitively, PS-DSF tries to allocate each server to the most efficient users.
Therefore, we expect that PS-DSF results in greater utilization for different resources of a server
compared to other allocation mechanisms such as TSF and C-DRFH.
To observe this, we have executed these algorithms over the interval $(0,300)$ sec for the cluster in Figure \ref{fig:num_exmp}.
For the PS-DSF, we start with an initial allocation and update the allocation every second according to the \emph{servers' procedure} (see our discussions in Section \ref{sec:main:DI} on distributed implementation). For TSF and C-DRFH mechanisms we precisely find these allocations every second.

It is assumed that all users except User 4 are continuously active during the simulation interval. User 4 is inactive during interval $(100,250)$ sec,
and is active elsewhere. The utilization that is achieved under any of these allocation mechanisms for the CPU at the third and the fourth classes of servers are shown respectively in Figure \ref{fig:NR1} (The CPU on the first two classes of servers and also the memory on all servers are fully utilized under any of the allocation mechanisms). It can be observed that the PS-DSF allocation mechanism results in greater utilization compared to the two other mechanisms in this example. Furthermore, it may be observed that the distributed version of the PS-DSF allocation algorithm promptly converges when changes occur in the set of active users.

\begin{figure}[h!]
\centering
\includegraphics[width=2.75in]{./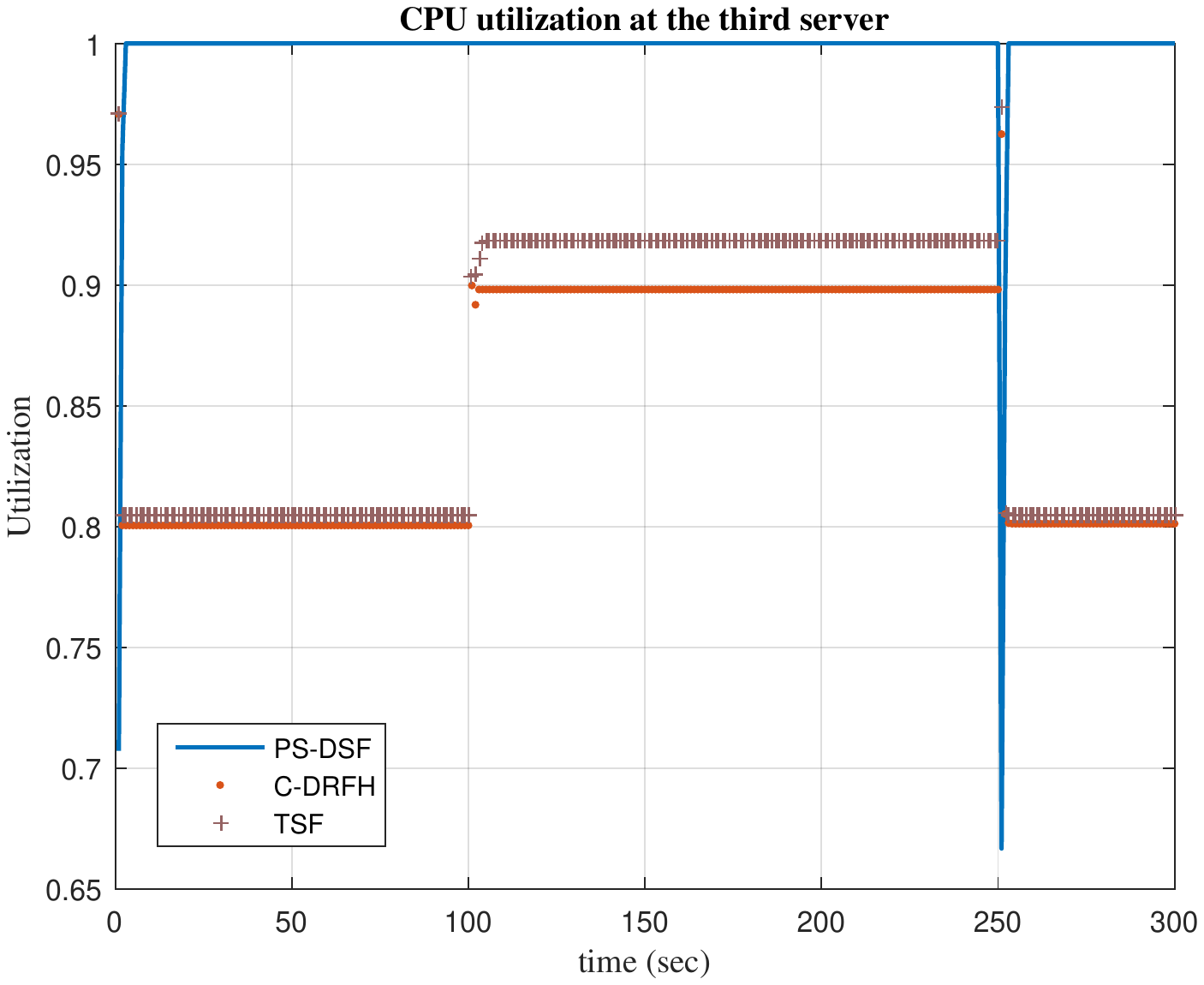}
\includegraphics[width=2.75in]{./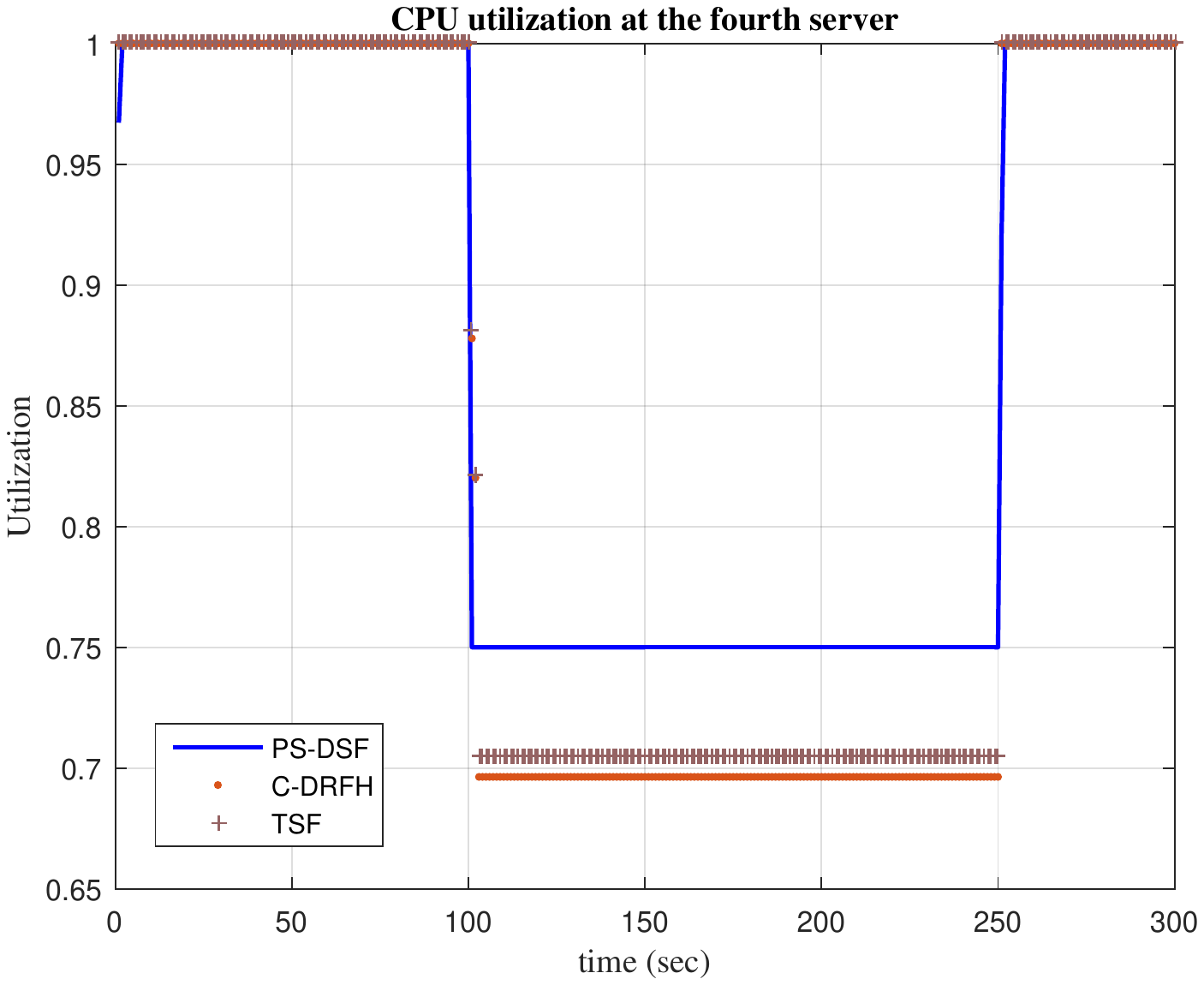}
\caption{\footnotesize The utilization that is achieved for the CPU at the third and the fourth classes of servers under PS-DSF, TSF and C-DRFH allocation mechanisms.}
\label{fig:NR1}
\end{figure}

\section{Conclusion}
In summary, we studied the problem of multi-resource fair allocation for heterogeneous servers while respecting placement constraints.
We identified important shortcomings in existing multi-resource fair allocation mechanisms when used in such environments.
Hence, we proposed a new allocation mechanism, called PS-DSF.
We discussed how our proposed allocation mechanism achieves different sharing properties which are satisfied under DRF in the case of one resource-pool/server. 
Furthermore, we discussed how PS-DSF could be implemented in a distributed manner. The performance of the PS-DSF allocation mechanism was compared against the existing allocation mechanisms and its enhanced performance was demonstrated through the numerical experiments. Further studies are under way and they will appear in future work.

\bibliographystyle{IEEEtran}
\bibliography{./PS-DSF_extended_revised}

\appendix
\begin{proof}[Proof of Theorem~\ref{NS_condition}]
First consider a feasible allocation $\{x_{m,i}\}$ for which there exists a bottleneck resource for every user w.r.t. every eligible server.
Let $b(n,i)$ denote the bottleneck resource for user $n$ w.r.t. server $i$. That is, $b(n,i)$ is saturated, $d_{n,b(n,i)}>0$, and
\be
\frac{s_{n,i}}{\phi_n}\ge\frac{s_{m,i}}{\phi_m},~\forall m\mbox{ such that }x_{m,i}d_{m,b(n,i)}>0.\label{PTH1_1}
\ee
Given that $b(n,i)$ is saturated, it is not possible to increase $x_{n,i}$, unless decreasing $x_{m,i}$ for some user $m$ with $x_{m,i}d_{m,b(n,i)}>0$. On the other hand, \eqref{PTH1_1} implies that ${s_{m,i}}/{\phi_m}\le{s_{n,i}}/{\phi_n}$ for any user $m$ with $x_{m,i}d_{m,b(n,i)}>0$. Hence, we may not increase the allocated tasks to user $n$ from any server $i$ unless decreasing $x_{m,i}$ for some user $m$ with ${s_{m,i}}/{\phi_m}\le{s_{n,i}}/{\phi_n}$. This implies that $\{x_{m,i}\}$ satisfies PS-DSF.

Now consider an allocation $\{x_{m,j}\}$ which satisfies PS-DSF.
Let $\mathcal{R}_{n,i}$ denote the set of demanded resources by user $n$ which are saturated at an eligible server $i$ under the allocation $\{x_{m,j}\}$, that is:
\be
\mathcal{R}_{n,i}:=\{r\mid d_{n,r}>0\mbox{ and }\sum_m x_{m,i}d_{m,r}=c_{i,r}\}.
\ee
This set includes the \emph{potential} bottleneck resources for user $n$ w.r.t. server $i$.
First we prove that $\mathcal{R}_{n,i}$ may not be empty under a PS-DSF allocation.
By contradiction, assume that $\mathcal{R}_{n,i}=\emptyset$, that is none of the demanded resources by user $n$ are saturated at server $i$.
In this case, we may increase $x_{n,i}$ by:
\be
z_{n,i}:=\min_{r:d_{n,r}>0}\frac{c_{i,r}-\sum_m x_{m,i}d_{m,r}}{d_{n,r}}>0,
\ee
without decreasing $x_{m,i}$ for any user $m$. However, this contradicts to the fact that $\{x_{m,j}\}$ satisfies PS-DSF.

Next, we show that there exists some resource $r\in\mathcal{R}_{n,i}$ which serves as a bottleneck for user $n$ w.r.t. server $i$.
By contradiction, assume that none of the resources in $\mathcal{R}_{n,i}$ is a bottleneck. That is, for any resource $r\in\mathcal{R}_{n,i}$
we may find some user $p$ with $x_{p,i}d_{p,r}>0$ such that ${s_{n,i}}/{\phi_n}<{s_{p,i}}/{\phi_p}$. Hence, we can increase $x_{n,i}$ by decreasing $x_{p,i}$ for some user(s) $p$ with ${s_{n,i}}/{\phi_n}<{s_{p,i}}/{\phi_p}$. That is, $x_{n,i}$ could be increased without decreasing $x_{m,i}$ for any user $m$ with ${s_{m,i}}/{\phi_m}\le{s_{n,i}}/{\phi_n}$. However, this contradicts to the fact that $\{x_{m,j}\}$ satisfies PS-DSF.
\end{proof}

\begin{proof}[Proof of Theorem~\ref{NS_condition2}]
Consider a feasible allocation $\{x_{n,i}\}$ for which \eqref{FC2_1} holds with equality, and the condition in \eqref{Per_server_cond2} is established.
Since \eqref{FC2_1} holds with equality, it is not possible to increase $x_{n,i}$,
unless decreasing $x_{m,i}$ for some user $m$ with $x_{m,i}>0$.
On the other hand, \eqref{Per_server_cond2} implies that $s_{m,i}/\phi_m\le s_{n,i}/\phi_n$
for any user $m$ with $x_{m,i}> 0$. Therefore, we may not increase the allocated tasks to any user $n$ from any server $i$
unless decreasing $x_{m,i}$ for some user $m$ with $s_{m,i}/\phi_m\le s_{n,i}/\phi_n$.
This implies that $\{x_{n,i}\}$ satisfies PS-DSF.

Now assume that $\{x_{n,i}\}$ satisfies PS-DSF.
By contradiction, assume that \eqref{FC2_1} holds with inequality for some server $i$.
In this case we may increase $x_{n,i}$ by:
\be
z_{n,i}=\gamma_{n,i}[1-\sum_m\frac{x_{m,i}}{\gamma_{m,i}}]>0,
\ee
without decreasing $x_{m,i}$ for any user $m$. However, this contradicts to the fact that $\{x_{n,i}\}$ satisfies PS-DSF.
Next, we show that \eqref{Per_server_cond2} is established under the PS-DSF allocation.
By contradiction, assume that we can find some user $p$ with $x_{p,i}>0$ such that ${s_{p,i}}/{\phi_p}>{s_{n,i}}/{\phi_n}$. Hence, we can increase $x_{n,i}$ by decreasing $x_{p,i}$ for user $p$ with ${s_{p,i}}/{\phi_p}>{s_{n,i}}/{\phi_n}$. That is, $x_{n,i}$ could be increased without decreasing $x_{m,i}$ for any user $m$ with ${s_{m,i}}/{\phi_m}\le {s_{n,i}}/{\phi_n}$. However, this contradicts to the fact that $\{x_{n,i}\}$ satisfies PS-DSF.
\end{proof}

\begin{proof}[Proof of Theorem~\ref{Properties}]
We prove single resource fairness, bottleneck fairness, envy freeness,
and sharing incentive properties for the more complicated case of RDM.
The proofs of these properties follow the same line of arguments in case of TDM, so we do not repeat them here.

\noindent{\bf Single resource fairness:}
When there is only one type of resource, then ${\bf d}_n=d_{n,1},~\forall n$ and $\gamma_{n,i}=\delta_{n,i}c_{i,1}/d_{n,1},~\forall n,i$. As a result:
\be
s_{n,i}=\frac{x_n}{\gamma_{n,i}}=\frac{x_nd_{n,1}}{c_{i,1}\delta_{n,i}}=\frac{a_n}{c_{i,1}\delta_{n,i}},
\ee
where $a_n$ is the allocated resource to user $n$ from all servers. According to the PS-DSF allocation, we may not increase $x_n$ (or equivalently $a_n$) while maintaining feasibility without decreasing $x_{m,i}$ for some user $m$ with $s_{m,i}/\phi_m\le s_{n,i}/\phi_n$ (or $a_m/\phi_m\le a_n/\phi_n$).
Therefore, the allocated resource to different users, $\{a_n\}$ satisfies (constrained) weighted max-min fairness.

\noindent{\bf Bottleneck fairness:}
Assume that there is one resource, say $r^*$, which is dominantly requested by every user from every eligible server.
By definition, $r^*$ is considered as the \emph{dominant resource} for every user $n$ w.r.t. every eligible server $i$.
Accordingly, $\gamma_{n,i}$ is given by $\gamma_{n,i}=\delta_{n,i}c_{i,r^*}/d_{n,r^*}$, and the VDS for user $n$
w.r.t. server $i$ is given by:
\be
s_{n,i}=\frac{x_n}{\gamma_{n,i}}=\frac{x_nd_{n,r^*}}{c_{i,r^*}\delta_{n,i}}=\frac{a_{n,r^*}}{c_{i,r^*}\delta_{n,i}}.
\ee
where $a_{n,r^*}$ is the amount of the bottleneck resource allocated to user $n$ from all servers.
According to the PS-DSF allocation, we may not increase $x_n$ (or equivalently $a_{n,r^*}$) while maintaining feasibility without decreasing $x_{m,i}$ for some user $m$ with $s_{m,i}/\phi_m\le s_{n,i}/\phi_n$ (or $a_{m,r^*}/\phi_m\le a_{n,r^*}/\phi_n$). Hence, the allocated bottleneck resource to different users, $\{a_{n,r^*}\}$ satisfies (constrained) weighted max-min fairness.

\noindent{\bf Envy freeness:}
Given $U_n({\bf a})$ as the utility function for user $n$, $U_n({\bf a}_{m}\phi_n/\phi_m)$ gives the utility of user $n$ of the allocated resources to user $m$ (i.e., ${\bf a}_{m}=x_{m}{\bf d}_{m}$), when adjusted according to their weights. According to \eqref{utility}:
\be\label{EF1}
U_n(\frac{\phi_n}{\phi_m}{\bf a}_{m})=\frac{\phi_n}{\phi_m}\min_r\frac{a_{m,r}}{d_{n,r}}=\frac{\phi_nx_{m}}{\phi_m}\min_r\frac{d_{m,r}}{d_{n,r}}.
\ee
We show that: 
\begin{eqnarray}\label{EF2}
\min_r\frac{d_{m,r}}{d_{n,r}} \le \frac{d_{m,\rho(n,i)}}{d_{n,\rho(n,i)}}
                               =  \frac{\gamma_{n,i}d_{m,\rho(n,i)}}{c_{i,\rho(n,i)}}
                              \le \frac{\gamma_{n,i}}{\gamma_{m,i}}, ~\forall i.
\end{eqnarray}
As a result:
\be\label{EF3}
U_n(\frac{\phi_n}{\phi_m}{\bf a}_{m})\le {x_{m}}\frac{\phi_n\gamma_{n,i}}{\phi_m\gamma_{m,i}}, ~\forall i.
\ee
According to Theorem 1, there exists a bottleneck resource for every user w.r.t. every eligible server. Consider server $i$ for which $x_{m,i}>0$.
Let $b(n,i)$ denote the bottleneck resource for user $n$ w.r.t. server $i$. For $U_n({\bf a}_{m}\phi_n/\phi_m)$ to be greater than zero (see \eqref{EF1}), we need $d_{m,b(n,i)}>0$. Given that $b(n,i)$ is the bottleneck for user $n$ and $x_{m,i}d_{m,b(n,i)}>0$, it follows that:
\be
\frac{x_m}{\phi_m\gamma_{m,i}}\le\frac{x_n}{\phi_n\gamma_{n,i}}.
\ee
This along with \eqref{EF3} results in:
\be
U_n(\frac{\phi_n}{\phi_m}{\bf a}_{m})\le x_n.
\ee

\noindent{\bf Sharing Incentive:}
Without loss of generality assume that the demand vector for every user $n$, ${\bf d}_n$, is normalized by $\sum_i\gamma_{n,i}$ (the number of tasks which could be executed by user $n$ if the whole system is allocated to it). In this case, $x_n$ and $\gamma_{n,i}$ will be normalized by the same factor and the VDS for user $n$ w.r.t. different servers and also the resulting PS-DSF allocation won't be changed. Specifically, define:
\begin{eqnarray}
\hat{x}_n:=\frac{x_n}{\sum_j{\gamma_{n,j}}}\\
\hat{\gamma}_{n,i}:=\frac{\gamma_{n,i}}{\sum_j{\gamma_{n,j}}}
\end{eqnarray}
For the uniform allocation:
\begin{eqnarray}\nonumber 
 \hat{x}_n^{{unif}} &=& \frac{\phi_n}{\sum_m\phi_m}\sum_i\hat{\gamma}_{n,i}=\frac{\phi_n}{\sum_m\phi_m},
\end{eqnarray}
where the second equality follows from the fact that $\sum_i\hat{\gamma}_{n,i}=1$.
We assert that $\hat{x}_n/\phi_n$ is greater than or equal to $1/{\sum_m\phi_m}$ for all users under the PS-DSF allocation.

The proof is by induction on the number of users, $N$. Specifically, for $N=2$ consider two users, $n$ and $m$. 
To consider the worst-case, assume that both users have the same bottleneck w.r.t. each server.
Assume that servers are indexed in increasing order of $\hat{\gamma}_{n,j}/\hat{\gamma}_{m,j}$.
Let $j_0$ denote the least indexed server from which some tasks are allocated to user $n$ under the PS-DSF allocation, that is $\hat{x}_{n,j_0}>0$.
Given that $x_{n,j_0}>0$ and both users have the same bottleneck w.r.t. server $j_0$, it follows that (see Definition~\ref{Def_bottleneck}):
\be
s_{m,j_0}/\phi_m\ge s_{n,j_0}/\phi_n.\label{SI4}
\ee

Without loss of generality assume that $\hat{\gamma}_{n,j}/\hat{\gamma}_{m,j}>\hat{\gamma}_{n,j_0}/\hat{\gamma}_{m,j_0}$, for $j> j_0$.
For these servers it follows that $s_{m,j}/\phi_m > s_{n,j}/\phi_n$.
This along with the assumption that user $m$ has the same bottleneck as user $n$ imply that $\hat{x}_{m,j}=0$ for $j>j_0$.
Therefore, server $j_0$ is the only server for which $\hat{x}_{n,j}\hat{x}_{m,j}$ could be greater than zero. Accordingly:
\begin{eqnarray}
\hat{x}_n=\alpha_{n,j_0}\hat{\gamma}_{n,j_0} + \sum_{j=j_0+1}^K\hat{\gamma}_{n,j},\label{SI5}\\
\hat{x}_m=\alpha_{m,j_0}\hat{\gamma}_{m,j_0} + \sum_{j=1}^{j_0-1}\hat{\gamma}_{m,j},\label{SI6}
\end{eqnarray}
where $\alpha_{n,j_0}$ ($\alpha_{m,j_0}$ respectively) denotes the portion of the DR for user $n$ (user $m$) w.r.t. server $j_0$
that is allocated to it under PS-DSF.
Substituting $\hat{x}_n$ and $\hat{x}_m$ from \eqref{SI5} and \eqref{SI6} into \eqref{SI4} results in:
\begin{eqnarray}\nonumber
\frac{\alpha_{m,j_0}}{\phi_m} + \sum_{j=1}^{j_0-1}\frac{\hat{\gamma}_{m,j}}{\phi_m\hat{\gamma}_{m,j_0}}
 & \ge & \frac{\alpha_{n,j_0}}{\phi_n} + \sum_{j=j_0+1}^K\frac{\hat{\gamma}_{n,j}}{\phi_n\hat{\gamma}_{n,j_0}}\\ \nonumber
 & \ge & \frac{1-\alpha_{m,j_0}}{\phi_n} + \sum_{j=j_0+1}^K\frac{\hat{\gamma}_{n,j}}{\phi_n\hat{\gamma}_{n,j_0}},
\end{eqnarray}
where the second inequality follows from the fact that $\alpha_{m,j_0}+\alpha_{n,j_0}\ge1$.
After some manipulations, it follows that $\alpha_{m,j_0}\ge(A+\phi_m)/(\phi_m+\phi_n)$, where:
\begin{eqnarray}
A:= \sum_{j=j_0+1}^K\phi_m\frac{\hat{\gamma}_{n,j}}{\hat{\gamma}_{n,j_0}} - \sum_{j=1}^{j_0-1}\phi_n\frac{\hat{\gamma}_{m,j}}{\hat{\gamma}_{m,j_0}}.
\end{eqnarray}
Applying the lower bound of $\alpha_{m,j_0}$ into \eqref{SI6} and after some manipulations, it follows that:
\begin{eqnarray}\nonumber
\frac{\hat{x}_m}{\phi_m} &\ge& \frac{\sum_{j=1}^{j_0}{\hat{\gamma}_{m,j}}+
                          \sum_{j=j_0+1}^K\frac{\hat{\gamma}_{n,j}}{\hat{\gamma}_{n,j_0}}\hat{\gamma}_{m,j_0}}{\phi_m+\phi_n}\\
                         &\ge& \frac{\sum_{j=1}^{K}{\hat{\gamma}_{m,j}}}{\phi_m+\phi_n} = \frac{1}{\phi_m+\phi_n}\label{SI8}
\end{eqnarray}
where the second inequality follows from the fact that $\hat{\gamma}_{n,j}/\hat{\gamma}_{m,j}\ge\hat{\gamma}_{n,j_0}/\hat{\gamma}_{m,j_0},~j\ge j_0$,
and the last equality follows from the fact that $\sum_j\hat{\gamma}_{m,j}=1$. The lower bound in \eqref{SI8} could be shown
for ${\hat{x}_n}/{\phi_n}$ in the same way.

Assume that the statement is established for $N=N_0$ users.
For the case that the set of all users, $\mathcal{N}$, consists of $N_0+1$ users, we may assume that $\mathcal{N}$
is comprised of a subset $\mathcal{N}_0$ of $N_0$ users with the total weight of $\Phi_0:=\sum_{n\in\mathcal{N}_0}\phi_n$
and a singular user $n_0$ with the weight of $\phi_0$. We assume that user $n_0$ is chosen arbitrarily.
We may consider $\phi_n/(\Phi_0+\phi_0)$ portion of the resources on every server as the share of each user $n$.
Assume that user $n_0$ does not share its resources with others. In this case, $\hat{x}_{n_0}={\phi_0}/(\Phi_0+\phi_0)$.

User $n_0$ would prefer to exchange all or part of its allocated resources from server $j$ with all or part of the allocated resources to user $m$ from server $i$, if
\be
\frac{\hat{\gamma}_{n_0,i}}{\hat{\gamma}_{m,i}}>\frac{\hat{\gamma}_{n_0,j}}{\hat{\gamma}_{m,j}}.
\ee
In this case, the number of allocated tasks to both of them could be increased compared to the generic uniform allocation.
We may repeat the same process, exchanging the allocated resources to user $n_0$ by that for other users,
until no more exchange is possible and $\hat{x}_{n_0}$ cannot be further increased.
After that, we may freeze the allocated resources to user $n_0$ and allocate the remaining resources among other users according to PS-DSF.
Given that sharing incentive is provided by PS-DSF for the set $\mathcal{N}_0$ with $N_0$ users, it follows that:
\be\label{N0_inequality}
{\hat{x}_n}\ge\frac{\Phi_0}{\Phi_0+\phi_0}\frac{{\phi_n}}{\Phi_0}=\frac{{\phi_n}}{\Phi_0+\phi_0},~\forall n\in\mathcal{N}_0.
\ee

If we allocate the whole resources among all users $n\in\mathcal{N}$ according to PS-DSF allocation, the number of allocated tasks to users
$n\in\mathcal{N}_0$ may not be decreased compared to the above-described allocation (because there is no reservation for user $n_0$ in this case).
That is, \eqref{N0_inequality} is established under the PS-DSF allocation. Since $n_0$ is chosen arbitrarily, we may repeat the same discussions
by choosing a different set $\mathcal{N}'_0$ which includes $n_0$. Hence, we may conclude that the lower bound in \eqref{N0_inequality} is established
for all users.

\noindent{\bf Pareto optimality:}
Consider an allocation, $\{x_{n,i}\}$, satisfying PS-DSF based on TDM.
For such an allocation, Theorem~\ref{NS_condition2} implies that \eqref{FC2_1} holds with equality for each server $i$.
Hence, we may not increase $x_{n,i}$ without decreasing $x_{m,i}$ for some user $m$ with $x_{m,i}>0$.
Furthermore, according to Theorem~\ref{NS_condition2}, for any user $m$ and server $i$ with $x_{m,i}>0$:
\be
x_{m,i}>0~\Rightarrow~s_{m,i}/\phi_m=\min_n s_{n,i}/\phi_n.\label{maximization_constraint}
\ee
In fact, PS-DSF allocation mechanism maximizes $x_m$ for each user $m$ subject to \eqref{maximization_constraint}.
The following lemma shows that this condition is not restricting, as we may not increase $x_n$ without decreasing $x_{m}$ for some user $m$,
even when violating the condition in \eqref{maximization_constraint}. This means that PS-DSF allocation is Pareto optimal in case of TDM.
\begin{lemma}
Assume that $\{x_{n,i}\}$ satisfies PS-DSF based on TDM.
Consider two arbitrary users, $n$ and $m$, for which $x_{n,i}>0$ and $x_{m,j}>0$.
If user $n$ exchanges all or part of its allocated tasks from server $i$ with all or part of the allocated tasks to user $m$ from server $j$, then the allocated tasks to at least one of them will be decreased compared to the PS-DSF allocation.
\end{lemma}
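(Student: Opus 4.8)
The plan is to parametrize the proposed exchange explicitly and then use the characterization of TDM-optimal PS-DSF allocations in Theorem~\ref{NS_condition2} to show that the two users' total task counts cannot both strictly increase. First I would set up the exchange under the time-sharing interpretation. Suppose user $n$ relinquishes $a\in(0,x_{n,i}]$ of its tasks on server $i$ and user $m$ relinquishes $b\in(0,x_{m,j}]$ of its tasks on server $j$. Since each task of a user $p$ on a server $\ell$ consumes a time fraction $1/\gamma_{p,\ell}$ (this is exactly what the TDM constraint \eqref{FC2_1} measures), releasing $a$ tasks of user $n$ frees a time share $a/\gamma_{n,i}$ on server $i$, which user $m$ can convert into $\gamma_{m,i}\,a/\gamma_{n,i}$ additional tasks; symmetrically, user $n$ converts the freed share $b/\gamma_{m,j}$ on server $j$ into $\gamma_{n,j}\,b/\gamma_{m,j}$ additional tasks. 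Hence the net changes in the total task counts are $\Delta x_n=\gamma_{n,j}\,b/\gamma_{m,j}-a$ and $\Delta x_m=\gamma_{m,i}\,a/\gamma_{n,i}-b$.

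Next I would argue by contradiction, assuming the exchange strictly benefits both users, i.e. $\Delta x_n>0$ and $\Delta x_m>0$. Writing these as $\gamma_{n,j}\,b/\gamma_{m,j}>a$ and $\gamma_{m,i}\,a/\gamma_{n,i}>b$, multiplying the two strictly positive inequalities, and cancelling the factor $ab>0$ yields $\gamma_{n,j}\gamma_{m,i}>\gamma_{n,i}\gamma_{m,j}$, equivalently $\gamma_{n,j}/\gamma_{m,j}>\gamma_{n,i}/\gamma_{m,i}$.

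The crux is to contradict this ratio inequality using the PS-DSF optimality conditions. Since $\{x_{n,i}\}$ satisfies PS-DSF under TDM and $x_{n,i}>0$, condition \eqref{Per_server_cond2} of Theorem~\ref{NS_condition2} forces $s_{n,i}/\phi_n\le s_{m,i}/\phi_m$; likewise $x_{m,j}>0$ forces $s_{m,j}/\phi_m\le s_{n,j}/\phi_n$. Substituting $s_{p,\ell}=x_p/\gamma_{p,\ell}$ and rearranging, the first inequality gives $x_n/x_m\le \phi_n\gamma_{n,i}/(\phi_m\gamma_{m,i})$ while the second gives $x_n/x_m\ge \phi_n\gamma_{n,j}/(\phi_m\gamma_{m,j})$. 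Chaining these two bounds on the common ratio $x_n/x_m$ and cancelling the positive factor $\phi_n/\phi_m$ yields $\gamma_{n,j}/\gamma_{m,j}\le \gamma_{n,i}/\gamma_{m,i}$, which directly contradicts the strict inequality obtained above. Therefore no such exchange can strictly increase both $x_n$ and $x_m$, so at least one of them must decrease, which is the claim.

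The main obstacle I anticipate is not the algebra but getting the TDM bookkeeping exactly right: in particular the conversion factors $\gamma_{m,i}/\gamma_{n,i}$ and $\gamma_{n,j}/\gamma_{m,j}$ that translate one user's released tasks into the other user's gained tasks on a shared server, which rely on interpreting the time-sharing constraint \eqref{FC2_1} (held with equality) correctly. One subtlety worth flagging is the boundary case $\gamma_{n,j}/\gamma_{m,j}=\gamma_{n,i}/\gamma_{m,i}$, in which a task-preserving exchange leaves both counts unchanged; this is harmless for Pareto optimality since no strict simultaneous gain occurs, and the argument above already rules out any such strict improvement.
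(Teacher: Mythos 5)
Your proof is correct and follows essentially the same route as the paper's: the same TDM bookkeeping (conversion factors $\gamma_{m,i}/\gamma_{n,i}$ and $\gamma_{n,j}/\gamma_{m,j}$ from \eqref{FC2_1} held with equality) and the same two inequalities extracted from \eqref{Per_server_cond2} at servers $i$ and $j$, merely reorganized as a contradiction on the ratio $\gamma_{n,j}/\gamma_{m,j}$ versus $\gamma_{n,i}/\gamma_{m,i}$ rather than the paper's direct derivation that $\Delta x_{m,i}<-\Delta x_{m,j}$. The boundary case you flag (equal ratios, task-preserving exchange) is likewise left implicit in the paper, which also only treats the case where user $n$ strictly gains.
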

\begin{proof}
Given that $x_{m,j}>0$ and $x_{n,i}>0$, we may decrease $x_{m,j}$ and $x_{n,i}$, and increase $x_{n,j}$ and $x_{m,i}$. Let $\Delta x_{n,i}$, $\Delta x_{m,i}$ and $\Delta x_{n,j}$, $\Delta x_{m,j}$ denote a \emph{feasible} change in the number of allocated tasks to users $n$ and $m$ while \eqref{FC2_1} holds with equality for both servers $i$ and $j$.
For \eqref{FC2_1} to hold with equality we have:
\be
\Delta x_{n,j}=-\Delta x_{m,j}\frac{\gamma_{n,j}}{\gamma_{m,j}}\label{POP1}\\
\Delta x_{m,i}=-\Delta x_{n,i}\frac{\gamma_{m,i}}{\gamma_{n,i}}\label{POP2}
\ee
Assume that $\Delta x_{n,i}+\Delta x_{n,j}>0$ or $-\Delta x_{n,i}<\Delta x_{n,j}$. This along with \eqref{POP1} and \eqref{POP2} results in:
\be
\Delta x_{m,i} & < & \Delta x_{n,j}\frac{\gamma_{m,i}}{\gamma_{n,i}}\\
               & < & -\Delta x_{m,j}\frac{\gamma_{n,j}}{\gamma_{m,j}}\frac{\gamma_{m,i}}{\gamma_{n,i}}.\label{POP4}
\ee
The fact that $x_{m,j}>0$ and $x_{n,i}>0$ along with \eqref{Per_server_cond2} result in:
\be
\frac{\gamma_{m,i}}{\gamma_{n,i}}\le\frac{\phi_n}{\phi_m}\frac{x_m}{x_n}\label{POP5}\\
\frac{\gamma_{n,j}}{\gamma_{m,j}}\le\frac{\phi_m}{\phi_n}\frac{x_n}{x_m}\label{POP6}
\ee
Combining \eqref{POP4} with \eqref{POP5} and \eqref{POP6} results in $\Delta x_{m,i}<-\Delta x_{m,j}$.
This means that the number of allocated tasks to user $m$ is decreased compared to PS-DSF allocation.
\end{proof}
%

\noindent{\bf Strategy proofness:}
Let $A:=\{a_{m,i}\}$ (and $A':=\{a'_{m,i}\}$, respectively) denote the resulting PS-DSF allocation when user $n$ trustfully declares ${\bf d}_n$ and
${\bf \delta}_n=[\delta_{n,i}]$ (non-trustfully declares ${\bf d}_n'$ and ${\bf \delta}'_n$). Users other than $n$ take the same actions in both cases
(whether trustful or non-trustful). Hence, $\gamma'_{m,i}=\gamma_{m,i}$ for $m\neq n$.
The number of tasks that user $n$ may actually execute under the allocation $A'$ (i.e., by using ${\bf a}'_n=x'_n{\bf d}'_n$) is given by:
\be\label{SP3}
U_n({\bf a}'_n)=\min_r\frac{a'_{n,r}}{d_{n,r}}=x'_n\min_r\frac{d'_{n,r}}{d_{n,r}}.
\ee
As in \eqref{EF2}, we can show that $\min_r\frac{d'_{n,r}}{d_{n,r}}\le\frac{\gamma_{n,i}}{\gamma'_{n,i}},~\forall i$.
Hence:
\be
U_n({\bf a}'_n)=x'_n\min_r\frac{d'_{n,r}}{d_{n,r}}\le x'_n\frac{\gamma_{n,i}}{\gamma'_{n,i}}.
\ee

For strategy proofness we need to show that $U_n({\bf a}'_n)\le x_n$.
By contradiction, assume that $U_n({\bf a}'_n)>x_n$. It follows that:
\be
s_{n,i}=\frac{x_n}{\gamma_{n,i}}<\frac{U_n(a'_n)}{\gamma_{n,i}}\le\frac{x'_n}{\gamma'_{n,i}}=s'_{n,i},~\forall i.
\ee
That is, the VDS for user $n$ is increased w.r.t. all servers under the allocation $A'$
compared to the allocation $A$, provided that $U_n({\bf a}'_n)>x_n$.
Let $\mathcal{U}$ denote the set of users for which $U_m({\bf a}'_m)>x_m$. For users $m\in\mathcal{U}$, $m\neq n$, it follows that
\be
s_{m,i}=\frac{x_m}{\gamma_{m,i}}<\frac{U_m(a'_m)}{\gamma_{m,i}}=\frac{x'_m}{\gamma'_{m,i}}=s'_{m,i},~\forall i.
\ee
We define:
\be
s_i:=\min_n\frac{s_{n,i}}{\phi_n} \label{VDSL}
\ee
as the Virtual Dominant Share Level, VDSL, at server $i$ under the allocation $A$.
In the same way, we define $s'_i$ as the VDSL at server $i$ under the allocation $A'$.
For any user $m\in\mathcal{U}$, Theorem~\ref{NS_condition2} implies that $s'_i=s'_{m,i}/\phi_m$ provided that $x'_{m,i}>0$. It follows that:
\be
s'_i=\frac{s'_{m,i}}{\phi_m}>\frac{s_{m,i}}{\phi_m}\ge s_i.
\ee
That is, the VDSL is increased at all servers for which $x'_{m,i}>0$ for some $m\in\mathcal{U}$.
Let define:
\be
\mathcal{S}:=\{i\mid x'_{m,i}>0\mbox{ for some }m\in\mathcal{U}\}.
\ee
Accordingly, no tasks are allocated under the allocation $A'$ from servers $j\notin\mathcal{S}$ to users $m\in\mathcal{U}$,
i.e., $x'_{m,j}=0,~m\in\mathcal{U},~j\notin\mathcal{S}$. Hence, VDSL at servers $j\notin\mathcal{S}$ may not be decreased under the allocation $A'$ compared to allocation $A$, that is $s'_j\ge s_j$ for servers $j\notin\mathcal{S}$. Therefore, $s'_i\ge s_i,~\forall i$, which in turn implies that $U_m({\bf a}'_m)=x'_m\ge x_m$ for $m\neq n$. This along with the assumption that $U_n({\bf a}'_n)>x_n$ contradict to Pareto optimality of the allocation $A$.
\end{proof}

\begin{proof}[Proof of Lemma~\ref{strategy_proofness}]
The proof follows the same line of arguments as the proof of strategy proofness in case of TDM.
Specifically, 
when all users demand all types of resources, the same resource serves as the bottleneck for all users w.r.t. each server.
Hence, we may define the VDSL at each server $i$ as in \eqref{VDSL}.
With the same line of arguments we may conclude that the VDSL is not decreased at any server $i$ under the allocation $A'$ compared to allocation $A$, provided that $U_n(a'_n)\ge x_n$. That is, $s'_i\ge s_i$, which in turn implies that $U_m({\bf a}'_m)\ge x_m$ $\forall m$. Therefore, user $n$ may not decrease the utilization of other users, by lying about its resource demands or the set of eligible servers, unless decreasing its own utilization.
\end{proof}

\end{document}